\newcommand{\tup}{\textup}
\def\boxspan{\mathit{span}}
\newcommand{\R}{{\mathbb{R}}}
\newcommand{\N}{{\mathbb{N}}}
\newcommand{\eg}{{\it e.g.}}
\newcommand{\KK}{\mathcal{K}_{\infty}}
\newcommand{\Let}{:=}
\newcommand{\intcc}[1]{\ensuremath{{\left[#1\right]}}}
\definecolor{myco}{rgb}{0.55, 0.0, 0.63}
\newtheorem{theorem}{Theorem}[section]
\newtheorem{proposition}[theorem]{Proposition}
\newtheorem{definition}[theorem]{Definition}
\newtheorem{remark}[theorem]{Remark}
\newtheorem{assumption}[theorem]{Assumption}
\numberwithin{equation}{section}
\long\def\@maketablecaption#1#2{\@tablecaptionsize
	\global \@minipagefalse
	\hbox to \hsize{\parbox[t]{\hsize}{\centering #1 \\ #2}}}
\begin{document}
	
	\begin{abstract}
 In this paper, we propose a compositional approach to construct opacity-preserving finite abstractions (a.k.a symbolic models) for networks of discrete-time nonlinear control systems. Particularly, we introduce new notions of simulation functions that characterize the distance between control systems while preserving opacity properties across them. Instead of treating large-scale systems in a monolithic manner, we develop a compositional scheme to construct the interconnected finite abstractions together with the overall opacity-preserving simulation functions. For a network of incrementally input-to-state stable control systems and under some small-gain type condition, an algorithm for designing local quantization parameters is presented to orderly build the local symbolic models of subsystems 
 such that the network of symbolic models simulates the original network for an a-priori defined accuracy while preserving its opacity properties.
	\end{abstract}
	
	\title[Compositional synthesis of opacity-preserving finite abstractions for interconnected systems]{Compositional synthesis of opacity-preserving finite abstractions for interconnected systems}

	\author{Siyuan Liu$^1$}
	\author{Majid Zamani$^{2,3}$}
	\address{$^1$Electrical and Computer Engineering Department, Technical University of Munich, Germany.}
	\email{sy.liu@tum.de}
	\address{$^2$Computer Science Department, University of Colorado Boulder, USA.}
	\email{majid.zamani@colorado.edu}
	\address{$^3$Computer Science Department, Ludwig Maximilian University of Munich, Germany.}
	\maketitle
	
	\section{Introduction}
	In the recent decade, the world has witnessed a rapid increase in applications of cyber-physical systems (CPSs), which are networked systems resulting from intricate interactions of   cyber components and physical plants. CPSs play a major role in our daily life and many safety-critical infrastructure, such as autonomous vehicles, implantable and wearable medical devices and smart communities. On the other hand, new threats have been continuously affecting the performance and safety of such applications. One of the major issues is security problems. In particular, the complex interaction between embedded (cyber) software and physical devices may release secret information and expose the system to (cyber) attackers. Therefore, new approaches to analyze or enforce security properties over safety-critical CPSs have drawn significant attentions in the past few years \cite{ashibani2017cyber,sandberg2015cyberphysical}. 
	
	In this paper, we focus on an information-flow security property called \emph{opacity}, which was originally proposed in the realm of computer science for the analysis of cryptographic protocols \cite{mazare2004using} but has not been thoroughly investigated in the domain of CPS. As a confidentiality property, opacity characterizes the ability of a system to avoid leaking ``secret'' information in the presence of outside observers with potentially malicious intentions. Intuitively, a system is called \emph{opaque} if it has the plausible deniability for its ``secret'' so that the outside observers cannot infer or determine the system's secret based on its (partial) observations. In discrete-event systems (DESs) literature, different notions of opacity have been proposed in order to capture various types of secret requirements, including state-based notions in \cite{saboori2011verification,saboori2012verification,saboori2013verification} and language-based notions in \cite{lin2011opacity}. In practical situations, the state-based notions of opacity of DESs are generally classified into the so-called initial-state opacity \cite{saboori2013verification}, current-state opacity \cite{saboori2007notions}, K-step opacity \cite{saboori2011verification}, and infinite-step opacity \cite{saboori2012verification}. Later on, more research on opacity for various classes of discrete systems has been conducted \cite{tong2017decidability, saboori2013current, chedor2015diagnosis}. We refer interested readers to the recent surveys in  \cite{jacob2016overview,lafortune2018history} for more  details about opacity of DESs. 
	 	
	Unfortunately, most of the existing results on opacity are tailored to DESs, where they consider the event-based observation model, i.e., some events of the system are observable or distinguishable
while some are not. Whereas in real-world applications, outputs are typically physical signals equipped with some metrics and state space are usually continuous. To this purpose, in some recent works \cite{8264445,ramasubramanian2019notions}, the notion of opacity was extended to discrete-time (switched) linear systems.  However, their definition of opacity is more related to an output reachability
property rather than an information-flow one. 
To the best of our knowledge, most of the existing results on opacity are not suitable for capturing the information-flow security for real-world CPSs.

In this work, we aim at leveraging symbolic techniques to tackle this property for CPSs. In particular, we address this property by constructing finite abstractions (a.k.a symbolic models) of the concrete systems based on some types of opacity-preserving simulation relations between the concrete systems and their abstractions. These relations enable us to verify or enforce opacity for the concrete systems by performing the corresponding analysis over the simpler finite ones. 
Moreover, by following such a detour process, one can leverage (by some adaptation) existing computational tools developed in the DESs literature to verify or enforce opacity over CPSs.

In recent years, there have been some attempts in the literature to leverage abstraction-based techniques for the verification or enforcement of opacity \cite{wu2018privacy,zhang2019opacity,yin2019approximate,liu2020stochastic}. 
The result in \cite{wu2018privacy} introduced an abstract model based on the belief space of the intruder, using which controllers are synthesized to enforce opacity. However, the systems considered there is modeled as transition systems with finite state set, thus, not suitable for general CPSs. 
In \cite{zhang2019opacity}, a new formulation of opacity-preserving (bi)simulation relations is proposed, which allows one to verify opacity of an infinite-state transition system by leveraging its associated quotient one. However, the notion of opacity proposed in this work assumes that the outputs of systems are symbols and exactly distinguishable from each other, thus, is only suitable for systems with purely logical output sets. 
In \cite{yin2019approximate}, a new notion called \emph{approximate opacity} is proposed to suitably capture the continuity of output spaces of real-world CPSs. Additionally, a new simulation relation, called approximate opacity-preserving simulation relation, was proposed to characterize the closeness of two (finite or infinite) systems while preserving approximate opacity across them. The recent results in \cite{liu2020stochastic} investigate opacity for discrete-time stochastic control systems using a notion of so-called initial-state opacity-preserving stochastic simulation functions between stochastic control systems and their finite abstractions in the form of finite Markov Decision Processes (MDPs). Though promising, when confronted with large-scale interconnected systems, the construction of finite abstractions in the aforementioned works will suffer severely from the \emph{curse of dimensionality} because the number of discrete states grows exponentially with the dimension of the concrete state set.
	
	Motivated by the abstraction-based techniques in \cite{zhang2019opacity,yin2019approximate,liu2020stochastic} and their computational complexity, here, we aim at providing a compositional framework to conquer this complexity challenge using a ``divide and conquer" strategy. To this purpose, we first introduce new notions of opacity-preserving simulation functions for both local subsystems and the entire networks. Based on this, we propose a compositional scheme on the construction of abstractions for concrete networks. Rather than dealing with the original large-scale system, our compositional framework allows one to construct opacity-preserving abstractions locally using local opacity-preserving simulation functions, while providing the guarantee that the interconnection of local finite abstractions simulates the concrete network while preserving opacity across them.
	
	First, by considering three basic notions of opacity, namely initial-state opacity, current-state opacity, and infinite-step opacity, we introduce new notions of initial-state (resp. current-state, and infinite-step) opacity-preserving simulation functions. Given the opacity-preserving simulation functions between subsystems and their finite abstractions, we provide a compositionality result showing that the interconnection of finite abstractions retains an opacity-preserving simulation relation with the original network. The overall opacity-preserving simulation function is constructed compositionally from the local simulation functions. Additionally, considering the class of incrementally input-to-state stable control systems, we exploit the interconnection topology of the network and present an algorithm to design quantization parameters of the local finite abstractions together with the local opacity-preserving simulation functions. Finally, we illustrate the effectiveness of our results on some examples.
	
Remark that compositional approaches have been also investigated recently for controller synthesis of interconnected CPSs, see e.g., \cite{tazaki2008bisimilar,pola2016symbolic,swikir2019compositional, kim2018constructing, mallik2018compositional}. 
	Unfortunately, none of those techniques is applicable to the verification or enforcement of opacity mainly because their underlying system relations do not necessarily preserve opacity across the related systems.

    \section{Notation and Preliminaries}
	\subsection{Notation}
	We denote by $\R$ and $\N$ the set of real numbers and non-negative integers, respectively.
	These symbols are annotated with subscripts to restrict them in
	the usual way, \eg, $\R_{>0}$ denotes the positive real numbers. We denote the closed, open, and half-open intervals in $\R$ by $[a~b]$,
	$]a~b[$, $[a~b[$, and $]a~b]$, respectively. For $a,b\!\in\!\N$ and $a\!\le\! b$, we
	use $[a;b]$, $]a;b[$, $[a;b[$, and $]a;b]$ to
	denote the corresponding intervals in $\N$.	
	Given $N \!\in \!\mathbb N_{\ge 1}$ vectors $x_i \!\in\! \mathbb R^{n_i}$, with $i\!\in\! [1;N]$, $n_i\!\in\! \mathbb N_{\ge 1}$, and $n \!= \!\sum_i n_i$, we denote the concatenated vector  in $\mathbb R^{n}$ by $x \!=\! [x_1;\!\ldots\!;x_N]$  and the infinity norm of $x$ by $\Vert x\Vert$.  
 We denote by $\{A\}_{ij}$ the individual elements in a matrix $A\!\in\! \R^{m\!\times\!n}\!$ and by $0_n$ the zero matrix in $\mathbb{R}^{n \!\times \!n}$. We use $\textup{card}(\cdot)$ to denote the cardinality of a set and $\varnothing$ to denote the empty set.  Given any $a\!\in\!\R$, $\vert a\vert$ denotes the absolute value of $a$. 
The composition of functions $f$ and $g$ is denoted by $f \!\circ\! g$.
We use notations $\mathcal{K}$ and $\mathcal{K}_{\infty}$ to denote the different classes of comparison functions, as follows: $\!\mathcal{K} \!\!=\! \!\{\gamma\!:\!\mathbb R_{\ge 0}\!\!\rightarrow\!\!\mathbb R_{\ge 0}  \! \mid  \! \gamma \text{ is continuous, strictly increasing and } \gamma(0)\!=\!0\}$; $\!\mathcal{K}_{\infty} \!\!=\! \!\{\gamma \!\in \!\mathcal{K} \! \!\mid \!  \lim\limits_{r\rightarrow \infty}\!\!\gamma(r) \!=\!\infty\}$.
For $\alpha$,$\gamma \!\in \!\mathcal{K}_{\infty}$ we write $\alpha\!<\!\gamma$ if $\alpha(s)\!<\!\gamma(s)$ for all $s\!>\!0$, and $\mathcal{I}_d\!\in\!\mathcal{K}_{\infty}$ denotes the identity function. 
Given sets $X$ and $Y$ with $X \! \subset  \! Y$, the complement of $X$ with respect to $Y$ is defined as $Y \backslash X \!=\! \{x :\! x \!\in\! Y, x \!\notin\! X\}.$

	The closed ball centered at $u\in{\mathbb{R}}^{m}$ with radius $\lambda$ is defined by $\mathcal{B}_{\lambda}(u)\!=\!\{v\in{\mathbb{R}}^{m}\,|\,\Vert u\!-\!v\Vert\!\leq\!\lambda\}$. We denote the closed ball centered at the origin in $\R^n$ and with radius $\lambda$ by $\mathcal{B}_{\lambda}$.
Consider a set $A$ of the form of finite union of boxes, i.e. $A \!=\! \bigcup_{j=1}^M A_j$, where $A_j\!=\! \prod_{i=1}^m [c_i^j, d_i^j] \!\subseteq\! \R^m$ with $c_i^j\!<\! d_i^j$. Define $\boxspan(A_j)\! \!=\! \!\min\{ | d_i^j \!-\! c_i^j| \!\mid\! i\!\in\!\intcc{1;m}\}$ and $\boxspan(A) \!\!=\!\! \min\{\boxspan(A_j)\!\mid\! j\!\in\!\intcc{1;M}\}$. For any quantization parameter $\eta$ with $\eta \!\leq\! \boxspan(A)$, define $[A]_\eta\!=\! \bigcup_{j=1}^M [A_j]_{\eta}$, where $[A_j]_{\eta} \!\!= \![\R^m]_{\eta}\!\cap\!{A_j}$ with $[\R^m]_{\eta}\!=\!\{a\!\in\! \R^m\!\mid\! a_{i}\!=\!k_{i}\eta,k_{i}\!\in\!\mathbb{Z},i\!\in\!\intcc{1;m}\}$. 
Note that $[A]_{\eta}\!\neq\!\varnothing$ for any $0\!\leq\!\eta\!\leq\!\boxspan(A)$. 
	With a slight abuse of notation, we write $[A]_0 \!:=\!A$. 
	
	For any set $A=\prod_{j=1}^N A_j$, where $A_j$ are of the form of finite union of boxes, and a vector of quantization parameters $\eta\! =\! [\eta_1;\!\dots\!;\eta_N]$ with $\eta_j \!\leq\! \boxspan(A_j)$, $\forall j\!\in\! [1;M]$, define $[A]_\eta\!=\! \prod_{j=1}^N [A_j]_{\eta_j}$. 
	Note that if $\eta\! =\! [\mu;\!\dots\!;\mu]$, we simply use notation $[A]_\mu$ rather than $[A]_\eta$. 
	The Minkowski sum of two sets $P,Q \!\subseteq\! \R^n$ is defined by $P \!\oplus Q \!= \{x \!\in\! \R^n|\exists_{p \in P, q \in Q}, x \!=\! p \!+\! q\}$. Given a set $\mathbb{S}\!\subseteq\!\R^n$ and a constant $\theta \!\in\! \mathbb R_{\ge 0}$, we define a new set $\mathbb S^{\theta}\! \!= \! \mathbb{S} \!\oplus\! \mathcal{B}_{\theta}$ as the inflated version of set $\mathbb{S}$.
	
	A \textit{directed graph} is denoted by $G \!=\! (\mathcal{V}, \mathcal{E})$, where $\mathcal{V}$ is the set of vertices, and $\mathcal{E} \!\subseteq\! \mathcal{V} \!\times \!\mathcal{V}$ is the set of edges with direction, where a directed edge is denoted by an ordered pair $(i,j)$ of vertices, $\forall i,j \!\in \!\mathcal{V}$, if there is an incoming edge from $j$ to $i$. 	A directed graph with no directed cycles is called \textit{acyclic}.  


	\subsection{Discrete-time control systems} 
	In this paper we study the class of discrete-time control systems of the following form.
	\begin{definition}\label{def:sys1}
		A discrete-time control system $\Sigma$ is defined by the tuple	$\Sigma\!=(\mathbb X,\mathbb U,\mathbb W,\mathcal{U},\mathcal{W},f,\mathbb Y,h)$
		where $\mathbb X$, $\mathbb U$, $\mathbb W$ and $\mathbb Y$ are the state set, external input set, internal input set, and output set, respectively. Sets $\mathcal{U}$ and $\mathcal{W}$, respectively, are used to denote the subsets of the set of all bounded functions $\nu:\N\rightarrow \mathbb U$ and $\omega:\N\rightarrow \mathbb W$, respectively. The set-valued map $f: \mathbb X\times \mathbb U \times \mathbb W\rightrightarrows \mathbb X $ is the state transition function, and $h:\mathbb X \rightarrow \mathbb Y$  is the output function.
		The discrete-time control system $\Sigma $ is described by difference inclusions of the form
		\begin{align}\label{eq:2}
		\Sigma:\left\{
		\begin{array}{rl}
		\mathbf{x}(t+1)\in& f(\mathbf{x}(t),\nu(t),\omega(t)),\\
		\mathbf{y}(t)=&h(\mathbf{x}(t)),
		\end{array}
		\right.
		\end{align}
		where $\mathbf{x}:\mathbb{N}\rightarrow \mathbb X $, $\mathbf{y}:\mathbb{N}\rightarrow \mathbb Y$, $\nu\in\mathcal{U}$, and $\omega\in\mathcal{W}$ are the state, output, external input, and internal input signals, respectively. We assume the output set $Y$ is equipped with the infinity norm as the metric defined on this set.
		
		System $\Sigma\!=(\mathbb X,\mathbb U,\mathbb W,\mathcal{U},\mathcal{W},f,\mathbb Y,h)$ is called deterministic if $\textup{card}(f(x,u,w))\leq1$ $ \forall x\in \mathbb X, \forall u\in \mathbb U, \forall w \in \mathbb W$, and non-deterministic otherwise. System $\Sigma$ is called blocking if $\exists x\in \mathbb X, \forall u\in \mathbb U, \forall w \in \mathbb W $ where $\textup{card}(f(x,u,w))=0$ and non-blocking if $\textup{card}(f(x,u,w))\neq 0$ $ \forall x\in \mathbb X, \exists u\in \mathbb U, \exists w \in \mathbb W$.  System $\Sigma$ is called finite if $\mathbb X,\mathbb U,\mathbb W$ are finite sets and infinite otherwise. In this paper, we only deal with non-blocking systems. 
	\end{definition}	
	Note that in our paper, we always consider systems with secret states which are supposed to be hidden from the intruder. Hereafter, we slightly modify the formulation in Definition \ref{def:sys1} to accommodate for sets of initial and secret states, as $\Sigma\!=(\mathbb X,\mathbb X_0,\mathbb X_S,\mathbb U,\mathbb W,\mathcal{U},\mathcal{W},f,\mathbb Y,h)$, where $\mathbb X_0 \subseteq \mathbb X$ is a set of initial states and $\mathbb X_S \subseteq \mathbb X$ is a set of secret states.

	Now, we provide a formal definition of interconnected control systems.
	Consider $N \in \N_{\geq 1}$ control subsystems 
	\begin{align}
	\label{eq:subcontrolsystem}
    \Sigma_i\!=(\mathbb X_i,\mathbb X_{i0},\mathbb X_{iS},\mathbb U_i,\mathbb W_i,\mathcal{U}_i,\mathcal{W}_i,f,\mathbb Y_i,h_i),
	\end{align}
	where $i \in [1;N]$, and assume internal inputs and output maps are partitioned as
	\begin{IEEEeqnarray}{c}	
		\label{internalinput}	
		w_i=[w_{i1};\ldots;w_{i(i-1)};w_{i(i+1)};\ldots;w_{iN}],\\
		\label{output}	
		h_i(x_i) = [h_{i1}(x_i);\dots;h_{iN}(x_i)],
	\end{IEEEeqnarray}
	with $\mathbb W_i=\prod_{j=1, j\neq i}^{N} \mathbb W_{ij}$ and $\mathbb Y_i=\prod_{j=1}^N \mathbb Y_{ij}$, $w_{ij} \!\in\! \mathbb{W}_{ij}$, $y_{ij} \!=\!h_{ij}(x_i)\!\in \!\mathbb Y_{ij}$.

	The outputs $y_{ii}$ are considered as external ones, whereas $y_{ij}$ with $i \neq j$ are interpreted as internal ones which are used to construct interconnections between subsystems. The dimension of $w_{ij}$ is assumed to be equal to that of $y_{ji}$. In the case that no connection exists between subsystems $\Sigma_i$ and $\Sigma_j$, we simply have $h_{ij} \equiv 0$. The interconnected control system is defined as the following.
	\begin{definition}
		\label{interconnectedsystem} 
		Consider $N \in \N_{\geq 1}$ control subsystems $\Sigma_i\!=(\mathbb X_i,\mathbb X_{i0},\mathbb X_{iS},\mathbb U_i,\mathbb W_i,\mathcal{U}_i,\mathcal{W}_i,f,\mathbb Y_i,h_i)$, $i\in[1;N]$, with the input-output structure given in \eqref{internalinput}-\eqref{output}. The interconnected control system denoted by $\mathcal{I}_{\mathcal{M}}(\Sigma_1,\dots,\Sigma_N)$ is a tuple
		\begin{align}
		\label{interConnectedsys}
		\Sigma\!=(\mathbb X,\mathbb X_0,\mathbb X_S,\mathbb U,\mathcal{U},f,\mathbb Y,h),
		\end{align} 
		where $ \mathbb X =\prod_{i=1}^N \mathbb X_i$, $\mathbb X_0 =\prod_{i=1}^N \mathbb X_{i0}$, $\mathbb X_S =\prod_{i=1}^N \mathbb X_{iS}$, 		
		$ \mathbb U=\prod_{i=1}^N \mathbb U_i$, $\mathcal{U}=\prod_{i=1}^N\mathcal{U}_i$, $ \mathbb Y=\prod_{i=1}^N \mathbb Y_{ii}$, the state transition and output functions are   
		\begin{align}\notag
		f(x,u)&=\{[x_{1}';\dots;x_{N}']|x_{i}' \in f_i(x_i,u_i,w_i), \forall i \in [1;N]\},\\ \notag
		h(x) &= [h_{11}(x_1);\dots;h_{NN}(x_N)],
		\end{align} 
		where $x = [x_1;\dots;x_N]$, $u = [u_1;\dots;u_N]$, $\mathcal{M} \in \mathbb{R}^{N \times N}$ is a matrix with elements $\{\mathcal{M}\}_{ii} = 0,\{\mathcal{M}\}_{ij} = \phi_{ij}, \forall i,j \in [1;N], i\neq j $, $0\leq\phi_{ij}\leq \boxspan(\mathbb{Y}_{ji})$, and the interconnection variables are constrained by
		\begin{align}\label{interconstraint}
        \Vert y_{ji}-w_{ij} \Vert \leq \phi_{ij},\quad [\mathbb{Y}_{ji}]_{\phi_{ij}} \subseteq \mathbb{W}_{ij}, \quad  \forall i,j \in [1;N], i\neq j. 
		\end{align}	
	 The set-valued map $f$ becomes $f:\mathbb X\times\mathbb U\rightrightarrows\mathbb X$ and \eqref{eq:2} reduces to
\begin{align}\label{eq:3}
\Sigma:\left\{
\begin{array}{rl}
\mathbf{x}(t+1)\in&f(\mathbf{x}(t),\nu(t)),\\
\mathbf{y}(t)=&h(\mathbf{x}(t)).
\end{array}\right.
\end{align}

\end{definition} 

\begin{remark}
Note that condition \eqref{interconstraint} is required to provide a well-posed interconnection. 
		Throughout this paper, when we are talking about the concrete interconnected system, $y_{ji}$ is always equal to $w_{ij}$ (i.e. $y_{ji}=w_{ij}$), which naturally implies $\phi_{ij}=0$ and $\mathcal{M}=0_N$. However, for the interconnection of finite abstractions, which will be constructed later in Subsection \ref{absCur},
		due to possibly different granularities of 
	 internal input and output sets, the designed parameters ${\phi_{ij}}$ are not necessarily zero to make the interconnection well-posed.	 
	Whenever $\phi_{ij} \neq 0$, sets $\mathbb{Y}_{ji}$ are assumed to be finite unions of boxes. 
\end{remark}

For the given system in \eqref{interConnectedsys}, we also denote by  $x \xlongrightarrow{u} x'$ a \textit{transition} in the system if and only if $x' \in f(x,u)$.  For any initial state $x_0 \in X_0$, a finite state run generated from $x_0$ is a finite sequence of transitions
$$x_0 \xlongrightarrow{u_1} x_1 \xlongrightarrow{u_2} \dots \xlongrightarrow{u_{n-1}} x_{n-1} \xlongrightarrow{u_n} x_n,$$
such that $x_i \xlongrightarrow{u_{i+1}} x_{i+1}$ for all $0 \leq i < n$.  A finite output run, also called an \textit{output trajectory}, is a sequence $\{y_0,y_1,\ldots,y_n\}$ such that there exists a finite state run $\{x_0,x_1, \dots, x_n\}$ with $y_i = h(x_i)$, for $i \in\{1,\ldots,n\}$. A finite state and output run can be readily extended to an infinite state and output run as well.	
%
\subsection{Strongly connected components} In this paper, we exploit the interconnection topology of the system and employ knowledge from graph theory as an essential tool in our main results. Here, let us first introduce the terminologes adopted in the paper and recall the notion of strongly connected components, which are used to represent the sub-network of the interconnected system \cite{baier2008principles}.

Consider an interconnected control system $\mathcal{I}_{\mathcal{M}}(\Sigma_1,\dots,\Sigma_N)$ induced by $N \in \N_{\geq 1}$ control subsystems $\Sigma_i$, as defined in Definition \ref{interconnectedsystem}. Hereafter, we denote the directed graph associated with $\mathcal{I}_{\mathcal{M}}(\Sigma_1,\dots,\Sigma_N)$ by $G = (I, E)$, where $I =  [1;N]$ is the set of vertices with each vertex $i \in I$ labelled with subsystem $\Sigma_i$, and $E \subseteq I \times I$ is the set of ordered pairs $(i,j)$, $\forall i,j \in I$, with $y_{ji} \neq 0$.
We denote by $\text{Pre}_I(i) = \{j \in I| \exists (i,j) \in E\}$ as the collection of vertices in $I$ which are direct predecessors of $i$, and by $\text{Post}_I(i) = \{j \in I| \exists (j,i) \in E\}$ as the set of vertices in $I$ which are direct successors of $i$. Intuitively, for any vertex $i$ in graph $G$, the predecessors and successors of $i$ indicate the neighboring subsystems of system $\Sigma_i$. The set $\text{Pre}_I(i)$ is the collection of neighboring subsystems $\Sigma_j$, $j \in I$, which provide internal inputs to subsystem $\Sigma_i$, and set $\text{Post}_I(i)$ is the set of neighboring subsystems $\Sigma_j$, $j \in I$, which accept internal inputs from $\Sigma_i$. Note that because of \eqref{internalinput}, there is no internal input from a subsystem to itself. Therefore, we have $i \notin \text{Pre}_I(i)$ and $i \notin \text{Post}_I(i)$, $\forall i \in I$, which means there is no self-loop for any vertex in $G$. The \textit{strongly connected components} (SCCs) of a directed graph $G$ are maximal strongly connected subgraphs \cite{baier2008principles}. 

In the sequel, we will denote the SCCs in $G$ by $\bar G_k$, $k \in [1;\bar N]$, where $\bar G_k = (I_k, E_k)$ and $\bar N$ is the number of SCCs in $G $. For any $\bar G_k$, we use $\bar N_k$ to denote the cardinality of $I_k$ and set $I_k = \{ k_1,\dots,k_{\bar N_k}\}$. Note that if we regard each SCC as a vertex, the resulting directed graph is acyclic. We denote by $\text{Pre}_G(\bar G_k)$ the collection of SCCs in graph G from which $\bar G_k$ is reachable in one step, by $\text{Post}_G(\bar G_k)$ the collection of SCCs in graph G that is reachable from $\bar G_k$ in one step,
and by $\text{BSCC}(G)= \{\bar G_k, k \in [1;\bar N] |\text{Post}_G(\bar G_k) = \varnothing\}$ the collection of bottom strongly connected components of graph $G$ from which no vertex in $G$ outside $\bar G_k$ is reachable.

\subsection{Approximate opacity for discrete-time control systems}	
Before stating our main results, let us here review notions of approximate opacity proposed in \cite{yin2019approximate}. The adopted notions of secrets are formulated as state-based. In this setting, it is assumed that there exists an intruder (malicious observer) that can only observe the outputs of the systems. Using the observed output information, the intruder aims at inferring the secret states of the system.  Opacity property essentially determines whether or not any trace that reveals secret behaviors of the system is indistinguishable from those, not revealing secret behavior, to an intruder. 
The three basic notions of opacity, i.e. approximate initial-state, current-state, and infinite-step opacity, introduced in \cite{yin2019approximate}, are recalled next.
\begin{definition}
Consider a control system $	\Sigma\!=(\mathbb X,\mathbb X_0,\mathbb X_S,\mathbb U,\mathcal{U},f,\mathbb Y,h)$ and a constant $\delta \geq 0$. System $\Sigma$ is said to be
\begin{itemize}
	\item $\delta$-approximate initial-state opaque if for any $x_0 \in \mathbb X_0 \cap \mathbb X_S$ and finite state run $x_0 \xlongrightarrow{u_1} x_1 \xlongrightarrow{u_2} \dots \xlongrightarrow{u_n} x_n$, there exists $x_0' \in \mathbb X_0 \setminus \mathbb X_S$ and a finite state run $x_0' \xlongrightarrow{u_1'} x_1' \xlongrightarrow{u_2'} \dots  \xlongrightarrow{u_n'} x_n'$ such that 
	$$\max_{i \in [0;n]} \Vert h(x_i)-h(x_i')\Vert \leq \delta.$$
	\item $\delta$-approximate current-state opaque if for any $x_0 \in \mathbb X_0$ and finite state run $x_0 \xlongrightarrow{u_1} x_1 \xlongrightarrow{u_2} \dots \xlongrightarrow{u_n} x_n$ such that $x_n \in \mathbb X_S$, there exists $x_0' \in \mathbb X_0$ and a finite state run $x_0' \xlongrightarrow{u_1'} x_1' \xlongrightarrow{u_2'} \dots \xlongrightarrow{u_n'} x_n'$ such that $x_n' \in \mathbb X \setminus \mathbb X_S$ and 
	$$\max_{i \in [0;n]} \Vert h(x_i)-h(x_i')\Vert \leq \delta.$$
	\item $\delta$-approximate infinite-step opaque if for any $x_0 \in \mathbb X_0$ and finite state run $x_0 \xlongrightarrow{u_1} x_1 \xlongrightarrow{u_2} \dots \xlongrightarrow{u_n} x_n$ such that $x_k \in \mathbb X_S$ for some $k \in [0;n]$, there exists $x_0' \in \mathbb X_0$ and a finite state run $x_0' \xlongrightarrow{u_1'} x_1' \xlongrightarrow{u_2'} \dots \xlongrightarrow{u_n'} x_n'$ such that $x_k' \in \mathbb X \setminus \mathbb X_S$ and 
	$$\max_{i \in [0;n]} \Vert h(x_i)-h(x_i')\Vert \leq \delta.$$
\end{itemize}
\end{definition}
\begin{remark}
Intuitively, the notions of approximate opacity provide a quantitative security guarantee that, if the intruder/observer does not have enough measurement precision, which is captured by the parameter $\delta$, then the secret information of the systems can not be revealed.  
Throughout this work we assume $X_0 \nsubseteq X_S$, otherwise opacity property is trivially violated. Note that we are always interested in verifying opacity of the interconnected systems $\Sigma$ as in Definition \ref{interconnectedsystem} rather than subsystems $\Sigma_i$ introduced in Definition \ref{def:sys1}. The subsystems will be used later in the main compositionality results to show opacity of the interconnected systems.
\end{remark}

\section{Opacity-Preserving Simulation Functions}
In this section, we introduce new notions of approximate opacity-preserving simulation functions, inspired by the notions of simulation functions proposed in \cite{girard2009hierarchical, swikir2019compositional}. The notions of simulation functions in  \cite{girard2009hierarchical, swikir2019compositional} are widely used in abstraction-based techniques to quantify the errors between systems and their abstractions, but without taking into account the opacity properties. The opacity-preserving simulation functions we propose here will play a crucial role in the compositionality results in the next section.   
\subsection{Initial-state opacity-preserving simulation function}	

	First, we introduce a new notion of initial-state opacity-preserving simulation functions. 
	\begin{definition}\label{def:SFD1}
		Consider 
		 $\Sigma_i\!=\!(\mathbb X_i,\!\mathbb X_{i0},\!\mathbb X_{iS},\!\mathbb U_i,\!\mathbb W_i,\!\mathcal{U}_i,\!\mathcal{W}_i,\!f,\!\mathbb Y_i,\!h_i)$ and
		$\hat{\Sigma}_i\!=\!(\hat{\mathbb{X}}_i,\!\hat{\mathbb{X}}_{i0},\!\hat{\mathbb{X}}_{iS},\!\hat{\mathbb{U}}_i,\!\hat{\mathbb{W}}_i,\!\hat{\mathcal{U}}_i,\!\hat{\mathcal{W}}_i,\!\hat{f}_i,\!\hat{\mathbb{Y}}_i,\!\hat{h}_i)$
		where $\hat{\mathbb{W}}_i\subseteq{\mathbb W_i}$ and $\hat{\mathbb{Y}}_i\subseteq{\mathbb Y_i}$. For $\varpi_i \in \mathbb R_{\ge 0}$, a function $V_{i} : \mathbb X_i \times \hat{\mathbb X}_i \rightarrow \mathbb R_{\ge 0}$ is called a $\varpi_i$-approximate initial-state opacity-preserving simulation function ($\varpi_i$-InitSOPSF) from $\Sigma_i$ to $\hat{\Sigma}_i$, if there exist a constant $\vartheta_i \in \mathbb R_{\ge 0}$, and a function $\alpha_i \in \mathcal{K_{\infty}}$ such that 
\begin{enumerate}
	\item[1](a) $\forall x_{i0} \in {\mathbb X}_{i0} \cap {\mathbb X}_{iS}$, $\exists \hat x_{i0} \in \hat {\mathbb X}_{i0} \cap \hat {\mathbb X}_{iS}$, s.t. 
	$V_{i}(x_{i0},\hat x_{i0}) \leq \varpi_i $;\\
	(b) $\forall \hat x_0 \in \hat {\mathbb X}_{i0} \setminus \hat {\mathbb X}_{iS}$, $\exists x_{i0} \in {\mathbb X}_{i0} \setminus {\mathbb X}_{iS}$, s.t.
	$V_{i}(x_{i0},\hat x_{i0}) \leq \varpi_i $;
	\item[2]   $\forall x_i \in \mathbb X_i, \forall \hat x_i \in \hat {\mathbb X}_i$, $\alpha_i(\Vert h_i(x_i) - \hat h_i(\hat x_i) \Vert) \leq V_{i}(x_i,\hat x_i)$;
	\item[3] $\forall x_i \in \mathbb X_i, \forall \hat x_i \in \hat {\mathbb X}_i$ s.t. $V_{i}(x_i,\hat x_i) \leq \varpi_i$, $\forall w_i \in \mathbb W_i$, $\forall \hat w_i \in \hat{\mathbb{W}}_i$ s.t. $\Vert w_i- \hat w_i \Vert \!\leq\! \vartheta_i$, the following conditions hold: \\
    (a) $\forall u_i \in \mathbb U_i$, $\forall x_{id} \in f_i(x_i,u_i,w_i)$, $\exists \hat u_i \in \hat{\mathbb{U}}_i$, $\exists\hat x_{id} \in \hat{f}_i(\hat{x}_i,\hat{u}_i,\hat{w}_i)$, s.t. $V_{i}(x_{id},\hat x_{id}) \leq \varpi_i$;\\
    (b) $\forall \hat u_i \in \hat{\mathbb{U}}_i$, $\forall \hat x_{id} \in \hat{f}_i(\hat{x}_i,\hat{u}_i,\hat{w}_i)$, $\exists u_i \in \mathbb U_i$, $\exists x_{id} \in f_i(x_i,u_i,w_i)$,  s.t. $V_{i}(x_{id},\hat x_{id}) \leq \varpi_i$.
\end{enumerate}
	\end{definition}

If there exists a $\varpi_i$-InitSOPSF from $\Sigma_i$ to $\hat{\Sigma}_i$, and $\hat{\Sigma}_i$ is finite (i.e. $\hat{\mathbb{X}}_i, \hat{\mathbb{U}}_i, \hat{\mathbb{W}}_i$ are finite sets), $\hat{\Sigma}_i$ is called an InitSOP finite abstraction (or symbolic model) of the concrete (original) system $\Sigma_i$, which is constructed later in Subsection \ref{absCur}.
Now, we consider systems without internal inputs, which is the case for interconnected systems (cf. Definition \ref{interconnectedsystem}) and rewrite Definition \ref{def:SFD1} as follows.
	\begin{definition}\label{def:SFD2}
		Consider systems $\Sigma\!=(\mathbb X,\mathbb X_0,\mathbb X_S,\mathbb U,\mathcal{U},f,\mathbb Y,h)$ and $\hat{\Sigma}\!=(\hat{\mathbb{X}},\hat{\mathbb{X}}_0,\hat{\mathbb {X}}_S,\hat{\mathbb{U}},\hat{\mathcal{U}},\hat{f},\hat{\mathbb{Y}},\hat{h})$, where $\hat{\mathbb{Y}}\subseteq{\mathbb{Y}}$. For $\varpi \in \mathbb R_{\ge 0}$, a function $\tilde V: \mathbb X \times \hat{\mathbb{X}} \rightarrow \mathbb R_{\ge 0}$ is called a $\varpi$-approximate initial-state opacity-preserving simulation function ($\varpi$-InitSOPSF) from $\Sigma$ to $\hat{\Sigma}$, if there exist a function $\alpha \in \mathcal{K_{\infty}}$ such that 
		\begin{enumerate}
			\item[1] (a) $\forall x_0 \in {\mathbb X}_0 \cap {\mathbb X}_S$, $\exists \hat x_0 \in \hat {\mathbb X}_0 \cap \hat {\mathbb X}_S$, s.t.
			$\tilde V(x_0,\hat x_0) \leq \varpi $;\\
			(b)  $\forall \hat x_0 \in \hat {\mathbb X}_0 \setminus \hat {\mathbb X}_S$, $\exists x_0 \in {\mathbb X}_0 \setminus {\mathbb X}_S$, s.t.
			$\tilde V(x_0,\hat x_0) \leq \varpi $;
			\item[2]  $\forall x \in \mathbb X, \forall \hat x \in \hat {\mathbb X}$, $\alpha(\Vert h(x) - \hat h(\hat x) \Vert) \leq \tilde V(x,\hat x)$;
			\item[3] $\forall x \in \mathbb X, \forall \hat x \in \hat {\mathbb X}$ s.t. $\tilde V(x,\hat x) \leq \varpi $, the following conditions hold: \\
			(a) $\forall u \in \mathbb U$, $\forall x_d \in f(x,u)$, $\exists \hat u\in \hat{\mathbb{U}}$, $\exists\hat x_d \in \hat{f}(\hat{x},\hat{u})$, s.t. $\tilde V(x_d,\hat x_d) \leq \varpi$;\\
			(b) $\forall \hat u\in \hat{\mathbb{U}}$, $\forall \hat x_d \in \hat{f}(\hat{x},\hat{u})$, $\exists u\in \mathbb U$, $\exists x_d \in f(x,u)$,  s.t. $\tilde V(x_d,\hat x_d) \leq \varpi$.
		\end{enumerate}
	\end{definition}
	If there exists a $\varpi$-InitSOPSF from $\Sigma$ to $\hat{\Sigma}$, and $\hat{\Sigma}$ is finite, $\hat{\Sigma}$ is called an InitSOP finite abstraction of the concrete system $\Sigma$.
			
	Before showing the next result, we recall the definition of $\varepsilon$-approximate initial-state opacity-preserving simulation relation in \cite{yin2019approximate}.
	
		\begin{definition}\label{def:InitSOP}
		Consider systems $\Sigma\!=(\mathbb X,\mathbb X_0,\mathbb X_S,\mathbb U,\mathcal{U},f,\mathbb Y,h)$ and  $\hat{\Sigma}\!=(\hat{\mathbb{X}},\hat{\mathbb{X}}_0,\hat{\mathbb {X}}_S,\hat{\mathbb{U}},\hat{\mathcal{U}},\hat{f},\hat{\mathbb{Y}},\hat{h})$, with $\hat{\mathbb{Y}}\subseteq{\mathbb{Y}}$. For $\varepsilon \in \mathbb R_{\ge 0}$, a relation $R \subseteq \mathbb X \times \hat{\mathbb{X}}$ is called an $\varepsilon$-approximate initial-state opacity-preserving simulation relation ($\varepsilon$-InitSOP simulation relation) from $\Sigma$ to $\hat{\Sigma}$ if 
		\begin{enumerate}
			\item[1] (a) $\forall x_0 \in {\mathbb X}_0 \cap {\mathbb X}_S$, $\exists \hat x_0 \in \hat {\mathbb X}_0 \cap \hat {\mathbb X}_S$, s.t.
			$(x_0,\hat x_0) \in R$;\\
			(b)  $\forall \hat x_0 \in \hat {\mathbb X}_0 \setminus \hat {\mathbb X}_S$, $\exists x_0 \in {\mathbb X}_0 \setminus {\mathbb X}_S$, s.t.
			$(x_0,\hat x_0) \in R$;
			\item[2]  $\forall (x, \hat{x}) \in R$, $\Vert h(x)-\hat h(\hat x)\Vert\leq \varepsilon$;
			\item[3]  For any $(x, \hat{x}) \in R$, we have\\
			(a) $\forall u \in \mathbb U$, $\forall x_d \in f(x,u)$, $\exists \hat u \in \hat{\mathbb{U}}$, $\exists\hat x_d \in \hat{f}(\hat{x},\hat{u})$, s.t. $(x_d,\hat x_d) \in R $; \\
			(b) $\forall \hat u\in \hat{\mathbb{U}}$, $\forall \hat x_d \in \hat{f}(\hat{x},\hat{u})$, $\exists u\in \mathbb U$, $\exists x_d \in f(x,u)$, s.t. $(x_d,\hat x_d) \in R $.
		\end{enumerate}
			We say that a system $\Sigma $ is $\varepsilon$-InitSOP simulated by a system $\hat{\Sigma}$  or a system $\hat{\Sigma}$ $\varepsilon$-InitSOP simulates a system  $\Sigma$, denoted by ${\Sigma} \preceq^{\varepsilon}_{\mathcal{I}}  \hat\Sigma$, if there
	exists an $\varepsilon$-InitSOP simulation relation $R$ from $\Sigma $ to $\hat{\Sigma} $ as in Definition \ref{def:InitSOP}.
	\end{definition}
	
	It is worth noting that the $\varepsilon$-approximate initial-state opacity-preserving simulation relation as in Definition \ref{def:InitSOP} characterizes the distance between two systems in terms of the satisfaction of approximate opacity. This relation not only considers the dynamic, but also the secret property of the system while considering abstractions. The usefulness of Definition \ref{def:InitSOP} in terms of preservation of approximate opacity across related systems is shown in \cite[Theorem 5.2]{yin2019approximate} as stated below.
\begin{proposition}\label{thm:InitSOP}
	Consider systems $\Sigma\!=(\mathbb X,\mathbb X_0,\mathbb X_S,\mathbb U,\mathcal{U},f,\mathbb Y,h)$ and  $\hat{\Sigma}\!=(\hat{\mathbb{X}},\hat{\mathbb{X}}_0,\hat{\mathbb {X}}_S,\hat{\mathbb{U}},\hat{\mathcal{U}},\hat{f},\hat{\mathbb{Y}},\hat{h})$ with the same output sets $\hat{\mathbb{Y}}={\mathbb{Y}}$ and
		let $\varepsilon,\delta\in\mathbb R_{\ge 0}$.
		If  ${\Sigma} \preceq^{\varepsilon}_{\mathcal{I}}  \hat\Sigma$ and $\varepsilon\leq \frac{\delta}{2}$,
		then the following implication holds
		\begin{align}
		&\hat{\Sigma}\tup{ is ($\delta-2\varepsilon$)-approximate initial-state opaque} \nonumber
		\Rightarrow \Sigma \tup{ is $\delta$-approximate initial-state opaque}.\nonumber
		\end{align}
\end{proposition}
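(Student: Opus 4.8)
The plan is to chain the three defining clauses of an $\varepsilon$-InitSOP simulation relation with the assumed opacity of $\hat\Sigma$, using the relation $R$ to pass from $\Sigma$ to $\hat\Sigma$ and back, and closing the output estimates with the triangle inequality. Concretely, fix an arbitrary secret initial state $x_0 \in \mathbb X_0 \cap \mathbb X_S$ of $\Sigma$ together with an arbitrary finite state run $x_0 \xlongrightarrow{u_1} x_1 \xlongrightarrow{u_2} \dots \xlongrightarrow{u_n} x_n$. By condition 1(a) of Definition~\ref{def:InitSOP} there is $\hat x_0 \in \hat{\mathbb X}_0 \cap \hat{\mathbb X}_S$ with $(x_0,\hat x_0)\in R$. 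I would then show, by induction on $i$ using condition 3(a), that there is an abstract run $\hat x_0 \xlongrightarrow{\hat u_1} \hat x_1 \dots \xlongrightarrow{\hat u_n} \hat x_n$ of $\hat\Sigma$ with $(x_i,\hat x_i)\in R$ for every $i\in[0;n]$; condition 2 then yields $\Vert h(x_i)-\hat h(\hat x_i)\Vert\le\varepsilon$ for all $i$.

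Next, since $\hat x_0$ is a secret initial state of $\hat\Sigma$ and $\hat\Sigma$ is $(\delta-2\varepsilon)$-approximate initial-state opaque, there exist a non-secret initial state $\hat x_0' \in \hat{\mathbb X}_0 \setminus \hat{\mathbb X}_S$ and a finite run $\hat x_0' \xlongrightarrow{\hat u_1'} \hat x_1' \dots \xlongrightarrow{\hat u_n'} \hat x_n'$ with $\max_{i\in[0;n]}\Vert \hat h(\hat x_i)-\hat h(\hat x_i')\Vert\le\delta-2\varepsilon$. I would now transfer this non-secret abstract witness down to $\Sigma$: by condition 1(b) there is a non-secret initial state $x_0'\in\mathbb X_0\setminus\mathbb X_S$ with $(x_0',\hat x_0')\in R$, and then, by an induction using condition 3(b), a concrete run $x_0' \xlongrightarrow{u_1'} x_1' \dots \xlongrightarrow{u_n'} x_n'$ of $\Sigma$ with $(x_i',\hat x_i')\in R$, whence condition 2 gives $\Vert h(x_i')-\hat h(\hat x_i')\Vert\le\varepsilon$ for all $i$.

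Finally, the triangle inequality gives, for every $i\in[0;n]$,
\[
\Vert h(x_i)-h(x_i')\Vert \le \Vert h(x_i)-\hat h(\hat x_i)\Vert + \Vert \hat h(\hat x_i)-\hat h(\hat x_i')\Vert + \Vert \hat h(\hat x_i')-h(x_i')\Vert \le \varepsilon + (\delta-2\varepsilon) + \varepsilon = \delta,
\]
so $\max_{i\in[0;n]}\Vert h(x_i)-h(x_i')\Vert\le\delta$; as $x_0$ and its run were arbitrary, $\Sigma$ is $\delta$-approximate initial-state opaque. I expect the run-lifting inductions to be routine, since condition 3 both consumes and reproduces membership in $R$, so the invariant $(x_i,\hat x_i)\in R$ (resp. $(x_i',\hat x_i')\in R$) propagates along the whole run. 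The only genuinely delicate point — and the sole place the hypothesis $\hat{\mathbb Y}=\mathbb Y$ is needed — is that all three distances in the displayed estimate live in the same metric space, which legitimizes the triangle inequality; the condition $\varepsilon\le\delta/2$ is used precisely to guarantee $\delta-2\varepsilon\ge 0$, so that $(\delta-2\varepsilon)$-approximate opacity of $\hat\Sigma$ is a meaningful hypothesis.
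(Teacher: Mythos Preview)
Your argument is correct and is exactly the natural three-stage chase (lift the secret run to $\hat\Sigma$ via 1(a)/3(a), invoke opacity of $\hat\Sigma$, then push the non-secret witness back to $\Sigma$ via 1(b)/3(b), and close with the triangle inequality). Note, however, that the paper does not supply its own proof of this proposition: it is quoted verbatim from \cite[Theorem~5.2]{yin2019approximate}, so there is no in-paper argument to compare against; your proof is the standard one and matches what appears in that reference.
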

		The above implication across two related systems basically provides us a sufficient condition for verifying approximate opacity using abstraction-based techniques. In particular, when confronted with a complex system $\Sigma$, one can do the opacity verification over the simpler system $\hat{\Sigma}$ instead of struggling with system $\Sigma$. 
		
		The next result shows that the existence of an $\varpi$-InitSOPSF for systems without internal inputs (as we introduced in Definition \ref{def:SFD2}) implies the existence of an $\varepsilon$-InitSOP simulation relation between them. 

	\begin{proposition}\label{propinit}
		Consider systems $\Sigma\!=(\mathbb X,\mathbb X_0,\mathbb X_S,\mathbb U,\mathcal{U},f,\mathbb Y,h)$ and  $\hat{\Sigma}\!=(\hat{\mathbb{X}},\hat{\mathbb{X}}_0,\hat{\mathbb {X}}_S,\hat{\mathbb{U}},\hat{\mathcal{U}},\hat{f},\hat{\mathbb{Y}},\hat{h})$, where $\hat{\mathbb{Y}}\subseteq{\mathbb{Y}}$. Assume $\tilde V$ is a $\varpi$-InitSOPSF from $\Sigma$ to $\hat{\Sigma}$ with the corresponding function $\alpha\in\mathcal{K}_\infty$ as in Definition \ref{def:SFD2}. Then, relation $R\subseteq\mathbb{X}\times \hat{\mathbb{X}}$ defined by 
		$$R=\left\{(x,\hat{x})\in \mathbb{X}\times \hat{\mathbb{X}}|\tilde V(x,\hat{x})\leq \varpi \right\},$$ 
		is an $\varepsilon$-InitSOP simulation relation, defined as in Definition \ref{def:InitSOP}, from $\Sigma$ to $\hat{\Sigma}$ with 
		\begin{align}\label{er}
		\varepsilon=\alpha^{-1}(\varpi).
		\end{align}
	\end{proposition}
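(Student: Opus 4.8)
The plan is to verify directly that the set $R=\{(x,\hat x)\in\mathbb X\times\hat{\mathbb X}\mid \tilde V(x,\hat x)\le\varpi\}$ satisfies the three conditions of Definition \ref{def:InitSOP} with $\varepsilon=\alpha^{-1}(\varpi)$. This is essentially a translation exercise: each condition in Definition \ref{def:SFD2} is phrased as ``$\tilde V(\cdot,\cdot)\le\varpi$'', which is by construction exactly membership in $R$, so the matching is almost mechanical. First I would dispatch condition 1: parts (a) and (b) of Definition \ref{def:SFD2}(1) assert, respectively, that for every $x_0\in\mathbb X_0\cap\mathbb X_S$ there is $\hat x_0\in\hat{\mathbb X}_0\cap\hat{\mathbb X}_S$ with $\tilde V(x_0,\hat x_0)\le\varpi$, i.e.\ $(x_0,\hat x_0)\in R$, and symmetrically for the non-secret initial states; these are verbatim Definition \ref{def:InitSOP}(1)(a)--(b).

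Next I would handle condition 2. For any $(x,\hat x)\in R$ we have $\tilde V(x,\hat x)\le\varpi$, and Definition \ref{def:SFD2}(2) gives $\alpha(\|h(x)-\hat h(\hat x)\|)\le \tilde V(x,\hat x)\le\varpi$. Since $\alpha\in\mathcal K_\infty$ is continuous, strictly increasing with $\alpha(0)=0$, it is invertible with $\alpha^{-1}$ also increasing, so applying $\alpha^{-1}$ yields $\|h(x)-\hat h(\hat x)\|\le\alpha^{-1}(\varpi)=\varepsilon$, which is Definition \ref{def:InitSOP}(2). (A small remark that $\alpha^{-1}$ is well-defined on $[0,\varpi]$ and that $\alpha^{-1}(\varpi)\ge 0$ suffices here.)

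Finally, condition 3: take any $(x,\hat x)\in R$, so $\tilde V(x,\hat x)\le\varpi$, which is precisely the hypothesis under which Definition \ref{def:SFD2}(3) applies. Its part (a) then says that for all $u\in\mathbb U$ and all $x_d\in f(x,u)$ there exist $\hat u\in\hat{\mathbb U}$ and $\hat x_d\in\hat f(\hat x,\hat u)$ with $\tilde V(x_d,\hat x_d)\le\varpi$, i.e.\ $(x_d,\hat x_d)\in R$; this is Definition \ref{def:InitSOP}(3)(a), and part (b) transfers identically. This completes the verification that $R$ is an $\varepsilon$-InitSOP simulation relation from $\Sigma$ to $\hat\Sigma$.

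There is no real obstacle here; the only point requiring a word of care is the invertibility of $\alpha$ and the monotonicity of $\alpha^{-1}$ used in the second step, and implicitly the fact that $R$ is nonempty where needed (which is guaranteed by condition 1 applied to the standing assumption $\mathbb X_0\nsubseteq\mathbb X_S$). Everything else is a direct rewriting of the simulation-function inequalities as relation membership. I would therefore present the argument compactly, condition by condition, rather than belaboring any computation.
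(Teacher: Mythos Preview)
Your proposal is correct and follows essentially the same approach as the paper's own proof: both verify the three conditions of Definition \ref{def:InitSOP} directly by rewriting ``$\tilde V(\cdot,\cdot)\le\varpi$'' as membership in $R$, with condition 2 handled via $\alpha^{-1}$ applied to the inequality from Definition \ref{def:SFD2}(2). Your additional remarks on the invertibility of $\alpha$ and nonemptiness of $R$ are harmless elaborations but not needed for the argument.
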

	\begin{IEEEproof}
		The first condition in Definition \ref{def:InitSOP} follows immediately from condition 1 in Definition \ref{def:SFD2}, i.e. $\tilde V(x_0,\hat x_0) \leq \varpi$. 
		Now, we show that $\forall (x, \hat x) \in R$: $\Vert h(x) - \hat h(\hat x) \Vert \leq \varepsilon$. From condition 2 in Definition \ref{def:SFD2}, one has $\alpha(\Vert h(x) - \hat h(\hat x) \Vert) \leq \tilde V(x,\hat x) \leq \varpi$, which readily results in  
		$\Vert h(x) - \hat h(\hat x) \Vert \leq \alpha^{-1}(\varpi)= \varepsilon$.
		Finally, we show the third condition of $R$. Consider any pair $(x,\hat{x})\in R$, i.e., $\tilde V(x,\hat{x})\leq \varpi$. From 3-a) in Definition \ref{def:SFD2}, one has  $\forall u$, $\forall x_d \in f(x,u)$, $\exists \hat u$, $\exists\hat x_d \in \hat{f}(\hat{x},\hat{u})$ such that $\tilde  V(x_d,\hat x_d) \leq \varpi$. It immediately follows that $(x_d,\hat x_d) \in R$ which satisfies condition 3-a) in Definition \ref{def:InitSOP}. Condition 3-b) can be proved in a similar way, which concludes the proof.  
	\end{IEEEproof}
Given the results of Proposition \ref{thm:InitSOP} and Proposition \ref{propinit}, one can readily see that the existence of a $\varpi$-InitSOPSF from $\Sigma$ to $\hat{\Sigma}$ as in Definition \ref{def:SFD2} implies that, if  $\hat{\Sigma}$ is ($\delta-2\varepsilon$)-approximate initial-state opaque, then $\Sigma$ is $\delta$-approximate initial-state opaque. 
Note that $\varpi_i$-InitSOPSFs introduced in Definition \ref{def:SFD1} are used later for the construction of $\varpi$-InitSOPSFs for the interconnected systems, and they are not used directly for deducing approximate opacity preserving simulation relation.

\subsection{Current-state opacity-preserving simulation function}

Here, we introduce a notion of current-state opacity-preserving simulation functions.
\begin{definition}\label{def:SFDcur}
Consider $\Sigma_i\!=\!(\mathbb X_i,\!\mathbb X_{i0},\!\mathbb X_{iS},\!\mathbb U_i,\!\mathbb W_i,\!\mathcal{U}_i,\!\mathcal{W}_i,\!f,\!\mathbb Y_i,\!h_i)$ and
$\hat{\Sigma}_i\!=\!(\hat{\mathbb{X}}_i,\!\hat{\mathbb{X}}_{i0},\!\hat{\mathbb{X}}_{iS},\!\hat{\mathbb{U}}_i,\!\hat{\mathbb{W}}_i,\!\hat{\mathcal{U}}_i,\!\hat{\mathcal{W}}_i,\!\hat{f}_i,\!\hat{\mathbb{Y}}_i,\!\hat{h}_i)$
where $\hat{\mathbb{W}}_i\subseteq{\mathbb W_i}$ and $\hat{\mathbb{Y}}_i\subseteq{\mathbb Y_i}$.  For $\varpi_i \in \mathbb R_{\ge 0}$, a function $V_{i} : \mathbb X_i \times \hat{\mathbb{X}}_i \rightarrow \mathbb R_{\ge 0}$ is called a $\varpi_i$-approximate current-state opacity-preserving simulation function ($\varpi_i$-CurSOPSF) from $\Sigma_i$ to $\hat \Sigma_i$, if there exist a constant $\vartheta_i \in \mathbb R_{\ge 0}$, and a function $\alpha_i \in \mathcal{K_{\infty}}$ such that 
	\begin{enumerate}
		\item[1] $\forall x_{i0} \in {\mathbb X}_{i0}$, $\exists \hat x_{i0} \in \hat {\mathbb X}_{i0}$, s.t. $V_{i}(x_{i0},\hat x_{i0}) \leq \varpi_i$;
		\item[2]  $\forall x_i \in \mathbb X_i, \forall \hat x_i \in \hat {\mathbb X}_i$, $\alpha_i(\Vert h_i(x_i) - \hat h_i(\hat x_i) \Vert) \leq V_{i}(x_i,\hat x_i)$;
		\item[3] $\forall x_i \!\in\! \mathbb X_i, \forall \hat x_i \!\in\! \hat {\mathbb X}_i$ s.t. $\!V_{i}(x_i,\!\hat x_i) \!\leq\! \varpi_i$, $\forall w_i \!\in\! \mathbb W_i$, $\forall \hat w_i \!\in\! \hat{\mathbb{W}}_i$ s.t. $\!\Vert w_i\!-\! \hat w_i \Vert \!\leq\! \vartheta_i$, the following conditions hold: \\
		(a) $\forall u_i \!\in\! \mathbb U_i$, $\forall x_{id} \!\in\! f_i(x_i,\!u_i,\!w_i)$, $\exists \hat u_i \!\in\! \hat{\mathbb{U}}_i$,  $\exists\hat x_{id} \!\in\! \hat{f}_i(\hat{x}_i,\!\hat{u}_i,\!\hat{w}_i)$, s.t. $V_{i}(x_{id},\!\hat x_{id}) \!\leq\! \varpi_i$;\\
		(b) $\forall u_i \!\in\! \mathbb U_i$, $\forall x_{id} \!\in \!f_i(x_i,\!u_i,\!w_i)$ s.t. $\!x_{id} \!\in \!\mathbb X_{iS}$, $\exists \hat u_i \!\in\! \hat{\mathbb{U}}_i$, $\exists\hat x_{id} \!\in\! \hat{f}_i(\hat{x}_i,\!\hat{u}_i,\!\hat{w}_i)$ with $\hat x_{id} \!\in\! \hat{\mathbb X}_{iS}$, s.t. $\!V_{i}(x_{id},\!\hat x_{id})\! \leq\! \varpi_i$;\\
		(c) $\forall \hat u_i \!\in\! \hat{\mathbb{U}}_i$, $\forall \hat x_{id} \!\in\! \hat{f}_i(\hat{x}_i,\!\hat{u}_i,\!\hat{w}_i)$, $\exists u_i \!\in\! \mathbb U_i$, $\exists x_{id} \in f_i(x_i,\!u_i,\!w_i)$,  s.t. $V_{i}(x_{id},\hat x_{id}) \leq \varpi_i$;\\
		(d) $\forall \hat u_i \!\in\! \hat{\mathbb{U}}_i$, $\forall \hat x_{id} \!\in\! \hat{f}_i(\hat{x}_i,\!\hat{u}_i,\!\hat{w}_i)$ s.t. $\hat x_{id} \in \hat{\mathbb{X}}_i \!\setminus\! \hat{\mathbb X}_{iS}$, $\exists u_i \!\in\! \mathbb U_i$, $\exists x_{id} \!\in\! f_i(x_i,\!u_i,\!w_i)$ with $x_{id} \!\in\! \mathbb X_i \!\setminus\! \mathbb X_{iS}$, s.t. $\!V_{i}(x_{id},\!\hat x_{id})\! \leq\! \varpi_i$.
	\end{enumerate}
\end{definition}

If there exists a $\varpi_i$-CurSOPSF from $\Sigma_i$ to $\hat{\Sigma}_i$, and $\hat{\Sigma}_i$ is finite, $\hat{\Sigma}_i$ is called a CurSOP finite abstraction of the concrete system $\Sigma_i$, which is constructed later in Subsection \ref{absCur}.
For interconnected systems without internal inputs, Definition \ref{def:SFDcur} boils down to the following one.
%
\begin{definition}\label{def:SRDcur2}
	Consider systems $\Sigma\!=(\mathbb X,\mathbb X_0,\mathbb X_S,\mathbb U,\mathcal{U},f,\mathbb Y,h)$ and $\hat{\Sigma}\!=(\hat{\mathbb{X}},\hat{\mathbb{X}}_0,\hat{\mathbb {X}}_S,\hat{\mathbb{U}},\hat{\mathcal{U}},\hat{f},\hat{\mathbb{Y}},\hat{h})$, where $\hat{\mathbb{Y}}\subseteq{\mathbb{Y}}$. For $\varpi \in \mathbb R_{\ge 0}$, a function $\tilde V : \mathbb X \times \hat{\mathbb{X}} \rightarrow \mathbb R_{\ge 0}$ is called a $\varpi$-approximate current-state opacity-preserving simulation function ($\varpi$-CurSOPSF) from $\Sigma$ to $\hat \Sigma$, if there exist a function $\alpha \in \mathcal{K_{\infty}}$ such that 
	\begin{enumerate}
		\item[1]  $\forall x_0 \in {\mathbb X}_0$, $\exists \hat x_0 \in \hat {\mathbb X}_0$, s.t. $\tilde V(x_0,\hat x_0) \leq \varpi $;
		\item[2]  $\forall x \in \mathbb X, \forall \hat x \in \hat {\mathbb X}$, $\alpha(\Vert h(x) - \hat h(\hat x) \Vert) \leq \tilde V(x,\hat x)$;
		\item[3]  $\forall x \in \mathbb X$, $\forall \hat x \in \hat {\mathbb X}$ s.t. $\tilde V(x,\hat x) \leq \varpi$, the following conditions hold: \\
		(a) $\forall u\!\in\! \mathbb{U}$, $\forall x_d \!\in\! f(x,u)$, $\exists \hat u\!\in\! \hat{\mathbb{U}}$, $\exists \hat x_d \!\in\! \hat{f}(\hat{x},\hat{u})$, s.t. $\tilde V(x_d,\hat x_d) \!\leq\! 	\varpi$;\\
		(b) $\forall u\!\in\! \mathbb{U}$, $\forall x_d \!\in\! f(x,u)$ s.t. $x_d \!\in\! \mathbb X_S$, $\exists \hat u\!\in\! \hat{\mathbb{U}}$, $\exists\hat x_d \!\in\! \hat{f}(\hat{x},\hat{u})$ with $\hat x_d \!\in\! \hat{\mathbb X}_S$, s.t. $\tilde V(x_d,\hat x_d) \!\leq\! \varpi$;\\
		(c) $\forall \hat u\!\in\! \hat{\mathbb{U}}$, $\forall \hat x_d \!\in\! \hat{f}(\hat{x},\hat{u})$, $\exists u\!\in\! \mathbb{U}$, $\exists x_d \!\in\! f(x,u)$, s.t. $\tilde V(x_d,\hat x_d) \!\leq\! \varpi$;\\
		(d) $\forall \hat u\!\in\! \hat{\mathbb{U}}$, $\forall \hat x_d \!\in\! \hat{f}(\hat{x},\hat{u})$ s.t. $\hat x_d \!\in\! \hat{\mathbb{X}} \!\setminus\! \hat{\mathbb X}_S$, $\exists u\!\in\! \mathbb{U}$, $\exists x_d \!\in\! f(x,u)$ with $x_d \!\in\! \mathbb X \!\setminus\! \mathbb X_S$, s.t. $\tilde V(x_d,\hat x_d) \!\leq\! \varpi$.
	\end{enumerate}
\end{definition}

If there exists a $\varpi$-CurSOPSF from $\Sigma$ to $\hat{\Sigma}$, and $\hat{\Sigma}$ is finite, $\hat{\Sigma}$ is called a CurSOP finite abstraction of the concrete system $\Sigma$.

Before showing the next result, we recall the definition of $\varepsilon$-approximate current-state opacity-preserving simulation relation defined in \cite{yin2019approximate}.

\begin{definition}\label{def:CURSOP}
	Consider systems $\Sigma\!=(\mathbb X,\mathbb X_0,\mathbb X_S,\mathbb U,\mathcal{U},f,\mathbb Y,h)$ and  $\hat{\Sigma}\!=(\hat{\mathbb{X}},\hat{\mathbb{X}}_0,\hat{\mathbb {X}}_S,\hat{\mathbb{U}},\hat{\mathcal{U}},\hat{f},\hat{\mathbb{Y}},\hat{h})$, where $\hat{\mathbb{Y}}\subseteq{\mathbb{Y}}$. For $\varepsilon \in \mathbb R_{\ge 0}$, a relation $R \subseteq \mathbb X \times \hat{\mathbb{X}}$ is called an $\varepsilon$-approximate current-state opacity-preserving simulation relation ($\varepsilon$-CurSOP simulation relation) from $\Sigma$ to $\hat \Sigma$ if 
	\begin{enumerate}
		\item[1] $\forall x_0 \in {\mathbb X}_0$, $\exists \hat x_0 \in \hat {\mathbb X}_0$, s.t. $(x_0,\hat x_0) \in R$;
		\item[2]  $\forall (x, \hat{x}) \in R$, $\Vert h(x) - \hat h(\hat x) \Vert \leq \varepsilon$;
		\item[3]  For any $(x, \hat{x}) \in R$, we have\\
		(a)  $\forall u\!\in\!\mathbb{U}$, $\forall x_d \!\in\! f(x,u)$, $\exists \hat u\!\in\! \hat{\mathbb{U}}$, $\exists\hat x_d \!\in\! \hat{f}(\hat{x},\hat{u})$, s.t. $(x_d,\hat x_d) \!\in\! R $; \\
		(b) $\forall u\!\in\!\mathbb{U}$, $\forall x_d \!\in\! f(x,u)$ s.t. $x_d \in \mathbb X_S$, $\exists \hat u\!\in\! \hat{\mathbb{U}}$, $\exists\hat x_d \!\in\! \hat{f}(\hat{x},\hat{u})$, s.t. $\hat x_d \!\in\! \hat{\mathbb X}_S$ and $(x_d,\hat x_d) \!\in\! R $; \\
		(c) $\forall \hat u\!\in\! \hat{\mathbb{U}}$, $\forall \hat x_d \!\in\! \hat{f}(\hat{x},\hat{u})$, $\exists u\!\in\!\mathbb{U}$, $\exists x_d \!\in\! f(x,u)$, s.t. $(x_d,\hat x_d) \!\in\! R $;\\
		(d) $\forall \hat u\!\in\! \hat{\mathbb{U}}$, $\forall\hat x_d \!\in\! \hat{f}(\hat{x},\hat{u})$ s.t. $\hat x_d \!\in\! \hat{\mathbb{X}} \!\setminus\! \hat{\mathbb X}_S$, $\exists u\!\in\!\mathbb{U}$, $\exists x_d \!\in\! f(x,u)$,  s.t. $x_d \!\in\! \mathbb X \!\setminus\! \mathbb X_S$ and $(x_d,\hat x_d) \!\in\! R $.
	\end{enumerate}
	We say that a system $\Sigma $ is $\varepsilon$-CurSOP simulated by a system $\hat{\Sigma}$  or a system $\hat{\Sigma}$ $\varepsilon$-CurSOP simulates a system  $\Sigma$, denoted by ${\Sigma} \preceq^{\varepsilon}_{\mathcal{C}}  \hat\Sigma$, if there
exists an $\varepsilon$-CurSOP simulation relation $R$ from $\Sigma $ to $\hat{\Sigma} $.

\end{definition}

The next result shows that the existence of a $\varpi$-CurSOPSF for systems without internal inputs (as defined in Definition \ref{def:SRDcur2}) implies the existence of an $\varepsilon$-CurSOP simulation relation between them. 

\begin{proposition}\label{propcur}
	Consider systems $\Sigma\!=(\mathbb X,\mathbb X_0,\mathbb X_S,\mathbb U,\mathcal{U},f,\mathbb Y,h)$ and  $\hat{\Sigma}\!=(\hat{\mathbb{X}},\hat{\mathbb{X}}_0,\hat{\mathbb {X}}_S,\hat{\mathbb{U}},\hat{\mathcal{U}},\hat{f},\hat{\mathbb{Y}},\hat{h})$, where $\hat{\mathbb{Y}}\subseteq{\mathbb{Y}}$. Assume $\tilde V$ is a $\varpi$-CurSOPSF from $\Sigma$ to $\hat \Sigma$ with the corresponding function $\alpha\in\mathcal{K}_\infty$ as in Definition \ref{def:SRDcur2}. Then, relation $R\subseteq\mathbb{X}\times \hat{\mathbb{X}}$ defined by 
	$$R=\left\{(x,\hat{x})\in \mathbb{X}\times \hat{\mathbb{X}}|\tilde V(x,\hat{x})\leq \varpi\right\},$$ 
	is an $\varepsilon$-CurSOP simulation relation from $\Sigma$ to $\hat \Sigma$ with 
	\begin{align}\label{ercur}
	\varepsilon=\alpha^{-1}(\varpi).
	\end{align}
\end{proposition}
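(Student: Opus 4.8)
The plan is to mimic closely the proof of Proposition~\ref{propinit}, the only structural difference being that condition~3 of Definition~\ref{def:CURSOP} now has four sub-clauses instead of two, and two of these carry an additional secret-set membership requirement. Concretely, I would take $R=\{(x,\hat x)\in\mathbb X\times\hat{\mathbb X}\mid \tilde V(x,\hat x)\le\varpi\}$ and $\varepsilon=\alpha^{-1}(\varpi)$, and verify the three conditions of Definition~\ref{def:CURSOP} one by one, each time reading off the required relational statement from the corresponding clause of Definition~\ref{def:SRDcur2} together with the definition of $R$.

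For condition~1: given $x_0\in\mathbb X_0$, condition~1 of Definition~\ref{def:SRDcur2} supplies $\hat x_0\in\hat{\mathbb X}_0$ with $\tilde V(x_0,\hat x_0)\le\varpi$, hence $(x_0,\hat x_0)\in R$. For condition~2: for any $(x,\hat x)\in R$ one has, by condition~2 of Definition~\ref{def:SRDcur2}, $\alpha(\Vert h(x)-\hat h(\hat x)\Vert)\le\tilde V(x,\hat x)\le\varpi$, and applying $\alpha^{-1}$ (which exists since $\alpha\in\mathcal{K}_\infty$) yields $\Vert h(x)-\hat h(\hat x)\Vert\le\alpha^{-1}(\varpi)=\varepsilon$.

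For condition~3, fix $(x,\hat x)\in R$, i.e.\ $\tilde V(x,\hat x)\le\varpi$, so all four clauses of condition~3 of Definition~\ref{def:SRDcur2} are available. Clause (a): Definition~\ref{def:SRDcur2}-3(a) gives, for each $u$ and each $x_d\in f(x,u)$, some $\hat u$ and $\hat x_d\in\hat f(\hat x,\hat u)$ with $\tilde V(x_d,\hat x_d)\le\varpi$, i.e.\ $(x_d,\hat x_d)\in R$, which is exactly Definition~\ref{def:CURSOP}-3(a). Clause (b): restricting further to $x_d\in\mathbb X_S$, Definition~\ref{def:SRDcur2}-3(b) provides $\hat x_d\in\hat{\mathbb X}_S$ with $\tilde V(x_d,\hat x_d)\le\varpi$, so simultaneously $\hat x_d\in\hat{\mathbb X}_S$ and $(x_d,\hat x_d)\in R$, as Definition~\ref{def:CURSOP}-3(b) requires. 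Clauses (c) and (d) are the same arguments with the roles of $\Sigma$ and $\hat\Sigma$ interchanged, using Definition~\ref{def:SRDcur2}-3(c) and 3(d); in (d) the membership $\hat x_d\in\hat{\mathbb X}\setminus\hat{\mathbb X}_S$ is carried over to $x_d\in\mathbb X\setminus\mathbb X_S$. This exhausts Definition~\ref{def:CURSOP} and concludes the proof.

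I do not expect a genuine obstacle: the argument is a direct translation of the simulation-function inequalities into relational membership statements. The only point needing a little care — and the closest thing to a ``hard'' step — is the bookkeeping in clauses (b) and (d), where the secret-set memberships must be transported across the abstraction together with the inequality $\tilde V\le\varpi$, so that both pieces of the relational conclusion in Definition~\ref{def:CURSOP}-3(b),(d) are simultaneously satisfied.
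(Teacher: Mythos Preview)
Your proposal is correct and is exactly the approach the paper takes: the paper's own proof simply states that ``the proof follows the same reasoning as that of Proposition~\ref{propinit} and is omitted here,'' and your argument spells out precisely that analogy, including the straightforward extra bookkeeping for the secret-set memberships in clauses~3(b) and~3(d).
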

\begin{IEEEproof}
The proof follows the same reasoning as that of Proposition \ref{propinit} and is omitted here. 
\end{IEEEproof}

\subsection{Infinite-step opacity-preserving simulation function}

Now, we introduce a notion of infinite-step opacity-preserving simulation functions by combining the conditions of $\varpi$-InitSOPSF and $\varpi$-CurSOPSF.
\begin{definition}\label{def:SFDInf}
Consider $\Sigma_i\!=\!(\mathbb X_i,\!\mathbb X_{i0},\!\mathbb X_{iS},\!\mathbb U_i,\!\mathbb W_i,\!\mathcal{U}_i,\!\mathcal{W}_i,\!f,\!\mathbb Y_i,\!h_i)$ and
$\hat{\Sigma}_i\!=\!(\hat{\mathbb{X}}_i,\!\hat{\mathbb{X}}_{i0},\!\hat{\mathbb{X}}_{iS},\!\hat{\mathbb{U}}_i,\!\hat{\mathbb{W}}_i,\!\hat{\mathcal{U}}_i,\!\hat{\mathcal{W}}_i,\!\hat{f}_i,\!\hat{\mathbb{Y}}_i,\!\hat{h}_i)$
where $\hat{\mathbb{W}}_i\subseteq{\mathbb W_i}$ and $\hat{\mathbb{Y}}_i\subseteq{\mathbb Y_i}$.  For $\varpi_i \in \mathbb R_{\ge 0}$, a function $V_{i} : \mathbb X_i \times \hat{\mathbb{X}}_i \rightarrow \mathbb R_{\ge 0}$ is called a $\varpi_i$-approximate infinite-step opacity-preserving simulation function ($\varpi_i$-InfSOPSF) from $\Sigma_i$ to $\hat \Sigma_i$, if it is both a $\varpi_i$-InitSOPSF and a $\varpi_i$-CurSOPSF from $\Sigma_i$ to $\hat \Sigma_i$.
\end{definition} 
If there exists a $\varpi_i$-InfSOPSF from $\Sigma_i$ to $\hat{\Sigma}_i$, and $\hat{\Sigma}_i$ is finite, $\hat{\Sigma}_i$ is called an InfSOP finite abstraction of the concrete system $\Sigma_i$, which is constructed later in Subsection \ref{absCur}.
For interconnected systems without internal inputs, Definition \ref{def:SFDInf} boils down to the following one.
\begin{definition}\label{def:SRDInf2}
	Consider systems $\Sigma\!=\!(\mathbb X,\mathbb X_0,\mathbb X_S,\mathbb U,\mathcal{U},f,\mathbb Y,h)$ and $\hat{\Sigma}\!=\!(\hat{\mathbb{X}},\hat{\mathbb{X}}_0,\hat{\mathbb {X}}_S,\hat{\mathbb{U}},\hat{\mathcal{U}},\hat{f},\hat{\mathbb{Y}},\hat{h})$, where $\hat{\mathbb{Y}}\subseteq{\mathbb{Y}}$. For $\varpi \in \mathbb R_{\ge 0}$, a function $\tilde V : \mathbb X \times \hat{\mathbb{X}} \rightarrow \mathbb R_{\ge 0}$ is called a $\varpi$-approximate infinite-step opacity-preserving simulation function ($\varpi$-InfSOPSF) from $\Sigma$ to $\hat \Sigma$, if it is both a $\varpi$-InitSOPSF and a $\varpi$-CurSOPSF from $\Sigma$ to $\hat \Sigma$.
\end{definition}

Before showing the next result, we recall the definition of $\varepsilon$-approximate infinite-step opacity-preserving simulation relation defined in \cite{yin2019approximate}.

\begin{definition}\label{def:InfSOP}
	Consider systems $\Sigma\!=(\mathbb X,\mathbb X_0,\mathbb X_S,\mathbb U,\mathcal{U},f,\mathbb Y,h)$ and  $\hat{\Sigma}\!=(\hat{\mathbb{X}},\hat{\mathbb{X}}_0,\hat{\mathbb {X}}_S,\hat{\mathbb{U}},\hat{\mathcal{U}},\hat{f},\hat{\mathbb{Y}},\hat{h})$, where $\hat{\mathbb{Y}}\subseteq{\mathbb{Y}}$. For $\varepsilon \in \mathbb R_{\ge 0}$, a relation $R \subseteq \mathbb X \times \hat{\mathbb{X}}$ is called an $\varepsilon$-approximate infinite-step opacity-preserving simulation relation ($\varepsilon$-InfSOP simulation relation) from $\Sigma$ to $\hat \Sigma$ if 
	\begin{enumerate}
		\item[1] (a) $\forall x_0 \in {\mathbb X}_0$, $\exists \hat x_0 \in \hat {\mathbb X}_0$, s.t. 
			$(x_0,\hat x_0) \in R$;\\
			(b) $\forall x_0 \in {\mathbb X}_0 \cap {\mathbb X}_S$, $\exists \hat x_0 \in \hat {\mathbb X}_0 \cap \hat {\mathbb X}_S$, s.t. 
			$(x_0,\hat x_0) \in R$;\\
			(c)  $\forall \hat x_0 \in \hat {\mathbb X}_0 \setminus \hat {\mathbb X}_S$, $\exists x_0 \in {\mathbb X}_0 \setminus {\mathbb X}_S$, s.t. 
			$(x_0,\hat x_0) \in R$;
		\item[2]  $\forall (x, \hat{x}) \in R$, $\Vert h(x) - \hat h(\hat x) \Vert \leq \varepsilon$;
		\item[3]  For any $(x, \hat{x}) \in R$, we have\\
		(a)  $\forall u\!\in\!\mathbb{U}$ $\forall x_d \!\in\! f(x,u)$, $\exists \hat u\!\in\! \hat{\mathbb{U}}$, $\exists\hat x_d \in \hat{f}(\hat{x},\hat{u})$, s.t. $(x_d,\hat x_d) \in R $; \\
		(b) $\forall u\!\in\!\mathbb{U}$ $\forall x_d \in f(x,u)$ s.t. $x_d \!\in \!\mathbb X_S$, $\exists \hat u\!\in\! \hat{\mathbb{U}}$, $\exists\hat x_d \!\in\! \hat{f}(\hat{x},\hat{u})$, s.t. $\hat x_d \!\in\! \hat{\mathbb X}_S$ and $(x_d,\hat x_d) \!\in\! R $; \\
		(c) $\forall \hat u\!\in\! \hat{\mathbb{U}}$ $\forall\hat x_d \!\in\! \hat{f}(\hat{x},\hat{u})$, $\exists u\!\in\!\mathbb{U}$, $\exists x_d \!\in\! f(x,u)$, s.t. $(x_d,\hat x_d) \!\in\! R $;\\
		(d) $\forall \hat u\!\in\! \hat{\mathbb{U}}$ $\forall\hat x_d \in \hat{f}(\hat{x},\hat{u})$ s.t. $\hat x_d \!\in\! \hat{\mathbb{X}} \!\setminus\! \hat{\mathbb X}_S$ , $\exists u\!\in\!\mathbb{U}$, $\exists x_d \!\in\! f(x,u)$,  s.t. $x_d \!\in\! \mathbb X \!\setminus\! \mathbb X_S$ and $(x_d,\hat x_d)\! \in\! R $.
	\end{enumerate}
	We say that a system $\Sigma $ is $\varepsilon$-InfSOP simulated by a system $\hat{\Sigma}$  or a system $\hat{\Sigma}$ $\varepsilon$-InfSOP simulates a system  $\Sigma$, denoted by ${\Sigma} \preceq^{\varepsilon}_{\mathcal{IF}}  \hat\Sigma$, if there
exists an $\varepsilon$-InfSOP simulation relation $R$ from $\Sigma $ to $\hat{\Sigma}$.

\end{definition}

%
The next result shows that the existence of an infinite-step opacity-preserving simulation function for systems without internal inputs (as in Definition \ref{def:SRDInf2}) implies the existence of an approximate infinite-step opacity-preserving simulation relation between them. 

\begin{proposition}
	Consider systems $\Sigma\!=(\mathbb X,\mathbb X_0,\mathbb X_S,\mathbb U,\mathcal{U},f,\mathbb Y,h)$ and  $\hat{\Sigma}\!=(\hat{\mathbb{X}},\hat{\mathbb{X}}_0,\hat{\mathbb {X}}_S,\hat{\mathbb{U}},\hat{\mathcal{U}},\hat{f},\hat{\mathbb{Y}},\hat{h})$, where $\hat{\mathbb{Y}}\subseteq{\mathbb{Y}}$. Assume $\tilde V$ is a $\varpi$-InfSOPSF from $\Sigma$ to $\hat \Sigma$ with the corresponding $\mathcal{K}_\infty$ function $\alpha$. Then, relation $R\subseteq\mathbb{X}\times \hat{\mathbb{X}}$ defined as 
	$$R=\left\{(x,\hat{x})\in \mathbb{X}\times \hat{\mathbb{X}}|\tilde V(x,\hat{x})\leq \varpi\right\},$$ 
	is an $\varepsilon$-InfSOP simulation relation from $\Sigma$ to $\hat \Sigma$ with 
	\begin{align}\label{erinf}
	\varepsilon=\alpha^{-1}(\varpi).
	\end{align}
\end{proposition}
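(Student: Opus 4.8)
The plan is to reduce the claim entirely to Propositions \ref{propinit} and \ref{propcur}, exploiting the fact that the list of requirements for an $\varepsilon$-InfSOP simulation relation in Definition \ref{def:InfSOP} is precisely the union of the requirements for an $\varepsilon$-InitSOP simulation relation (Definition \ref{def:InitSOP}) and an $\varepsilon$-CurSOP simulation relation (Definition \ref{def:CURSOP}), evaluated on one and the same relation $R$.

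First I would unpack Definition \ref{def:SRDInf2}: since $\tilde V$ is a $\varpi$-InfSOPSF from $\Sigma$ to $\hat\Sigma$, it is simultaneously a $\varpi$-InitSOPSF (Definition \ref{def:SFD2}) and a $\varpi$-CurSOPSF (Definition \ref{def:SRDcur2}). Condition 2 is the identical inequality $\alpha(\Vert h(x)-\hat h(\hat x)\Vert)\le \tilde V(x,\hat x)$ in both definitions, so there is no conflict; if the two roles came with different $\mathcal{K}_\infty$ bounds $\alpha_{\mathrm{init}},\alpha_{\mathrm{cur}}$, I would simply take $\alpha:=\min\{\alpha_{\mathrm{init}},\alpha_{\mathrm{cur}}\}\in\mathcal{K}_\infty$ (the pointwise minimum of two $\mathcal{K}_\infty$ functions is again in $\mathcal{K}_\infty$), which still satisfies condition 2 in both definitions and gives a single value $\varepsilon=\alpha^{-1}(\varpi)$. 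With this $\alpha$, Proposition \ref{propinit} gives that $R=\{(x,\hat x)\in\mathbb X\times\hat{\mathbb X}:\tilde V(x,\hat x)\le\varpi\}$ is an $\varepsilon$-InitSOP simulation relation, and Proposition \ref{propcur} gives that the very same $R$ is an $\varepsilon$-CurSOP simulation relation, both with the same $\varepsilon=\alpha^{-1}(\varpi)$.

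Then I would match the conditions of Definition \ref{def:InfSOP} one at a time: condition 2 is immediate, as it is the common output condition already established; condition 1(a) is verbatim condition 1 of Definition \ref{def:CURSOP}; conditions 1(b) and 1(c) are verbatim conditions 1(a) and 1(b) of Definition \ref{def:InitSOP}; and conditions 3(a)--(d) are verbatim conditions 3(a)--(d) of Definition \ref{def:CURSOP}. Since every item is supplied by one of the two simulation relations already obtained for this $R$ and this $\varepsilon$, it follows that $R$ is an $\varepsilon$-InfSOP simulation relation from $\Sigma$ to $\hat\Sigma$ with $\varepsilon=\alpha^{-1}(\varpi)$, which is exactly the assertion.

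The argument is a pure bookkeeping reduction, so there is no genuine obstacle; the only step requiring a sentence of care is harmonizing the $\mathcal{K}_\infty$ functions coming from the InitSOPSF and CurSOPSF parts so that Propositions \ref{propinit} and \ref{propcur} produce a common $\varepsilon$, and this is dispatched by passing to the pointwise minimum as above (together with the observation that an $\varepsilon'$-InitSOP or $\varepsilon'$-CurSOP simulation relation with $\varepsilon'\le\varepsilon$ is a fortiori an $\varepsilon$-one, since only condition 2 involves $\varepsilon$ and it only weakens).
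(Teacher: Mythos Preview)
Your proposal is correct and follows essentially the same route as the paper: the paper's proof simply observes that the $\varepsilon$-InfSOP simulation relation is the conjunction of the $\varepsilon$-InitSOP and $\varepsilon$-CurSOP ones, and likewise for the simulation functions, and then appeals to Propositions~\ref{propinit} and~\ref{propcur}. Your extra remark about harmonizing two possibly different $\mathcal{K}_\infty$ functions via a pointwise minimum is unnecessary here since the statement already fixes a single $\alpha$, but it is harmless.
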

\begin{IEEEproof}
	The definition of $\varepsilon$-InfSOP simulation relation is a combination of those of $\varepsilon$-InitSOP simulation relation and $\varepsilon$-CurSOP simulation relation, and likewise for the definition of $\varpi$-InfSOPSF. Thus, the proof of $R$ being $\varepsilon$-InitSOP simulation relation as in Proposition \ref{propinit} and being $\varepsilon$-CurSOP simulation relation as in Proposition \ref{propcur} conclude the proof of this proposition.
\end{IEEEproof}

\section{Compositionality Result}
	In this section, we analyze networks of discrete-time control subsystems. We show how to construct opacity-preserving simulation functions from a network of abstractions to the concrete network compositionally by using local opacity-preserving simulation functions of the subsystems. In fact, utilizing opacity-preserving simulation functions helps us to show the main compositionality result in this section.


\subsection{Compositional construction of abstractions}
In this subsection, we assume that we are given $N$ concrete control subsystems $\Sigma_i\!=(\mathbb X_i,\mathbb X_{i0},\mathbb X_{iS},\mathbb U_i,\mathbb W_i,\mathcal{U}_i,\mathcal{W}_i,f,\mathbb Y_i,h_i)$ together with their corresponding abstractions
$\hat{\Sigma}_i\!=(\hat{\mathbb{X}}_i,\hat{\mathbb{X}}_{i0},\hat{\mathbb{X}}_{iS},\hat{\mathbb{U}}_i,\hat{\mathbb{W}}_i,\hat{\mathcal{U}}_i,\hat{\mathcal{W}}_i,\hat{f}_i,\hat{\mathbb{Y}}_i,\hat{h}_i)$ and opacity-preserving simulation functions $V_i$ from
$\Sigma_i$ to $\hat\Sigma_i$.


The next theorem provides a compositional approach on the construction of abstractions of networks of control subsystems and that of the corresponding initial-state  opacity-preserving simulation functions. 
\begin{theorem}\label{thm:3}
Consider an interconnected control system
$\Sigma=\mathcal{I}_{0_N}(\Sigma_1,\ldots,\Sigma_N)$ induced by
	$N\in\N_{\ge1}$
	control subsystems~$\Sigma_i$. Assume that each $\Sigma_i$ and its abstraction $\hat{\Sigma}_i$ admit a $\varpi_i$-InitSOPSF $V_i$.  Let $\varpi = \max\limits_{i} \varpi_i$ and $\mathcal{\hat M} \in \mathbb{R}^{N \times N}$ be a matrix with elements $\{\mathcal{\hat M}\}_{ii} = 0,\{\mathcal{\hat M}\}_{ij} = \phi_{ij}, \forall i,j \in [1;N], i\neq j $, $0\leq\phi_{ij}\leq \boxspan(\mathbb{\hat Y}_{ji})$. If $\forall i \in [1;N]$ and $\forall j \in \tup{Pre}_I(i)$,
		\begin{align} \label{compoquaninit}
			\alpha^{-1}_{j}(\varpi_j) 	+ \phi_{ij} \leq \vartheta_i, 
		\end{align}
	then, function 
		\begin{align}\label{defVinit}
		\tilde{V}&(x,\hat{x})\Let\max\limits_{i}\{ \frac{\varpi}{\varpi_i} V_i(x_{i},\hat{x}_{i}) \},
		\end{align}
	is a $\varpi$-InitSOPSF  from $\Sigma$ to $\hat \Sigma={\mathcal{ I}_{ \mathcal{\hat M}}}(\hat{\Sigma}_1,\ldots,\hat{\Sigma}_N)$. 
\end{theorem}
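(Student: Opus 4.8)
The plan is to check that $\tilde V$ from~\eqref{defVinit} satisfies the three conditions of Definition~\ref{def:SFD2} for the networks $\Sigma$ and $\hat\Sigma=\mathcal I_{\hat{\mathcal M}}(\hat\Sigma_1,\dots,\hat\Sigma_N)$, using as companion $\mathcal K_\infty$ function $\alpha(r):=\min_{i\in[1;N]}\frac{\varpi}{\varpi_i}\,\alpha_i(r)$, with $\alpha_i$ taken from the $\varpi_i$-InitSOPSF $V_i$ as in Definition~\ref{def:SFD1}. The recurring device is the equivalence $\tilde V(x,\hat x)\le\varpi\iff V_i(x_i,\hat x_i)\le\varpi_i$ for all $i$, valid because $\varpi=\max_i\varpi_i$; it is what lets the local guarantees be assembled coordinate by coordinate.

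\textbf{Conditions 1 and 2.} Condition~1 is obtained coordinatewise from the product structure $\mathbb X_0=\prod_i\mathbb X_{i0}$, $\mathbb X_S=\prod_i\mathbb X_{iS}$ of $\Sigma$ (and likewise for $\hat\Sigma$) supplied by Definition~\ref{interconnectedsystem}. For part~(a), a state $x_0\in\mathbb X_0\cap\mathbb X_S$ has all coordinates in $\mathbb X_{i0}\cap\mathbb X_{iS}$, so condition~1(a) of each $V_i$ gives $\hat x_{i0}\in\hat{\mathbb X}_{i0}\cap\hat{\mathbb X}_{iS}$ with $V_i\le\varpi_i$; the concatenation $\hat x_0$ then lies in $\hat{\mathbb X}_0\cap\hat{\mathbb X}_S$ and has $\tilde V(x_0,\hat x_0)\le\varpi$. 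Part~(b) is handled by first fixing an index $j$ with $\hat x_{j0}\notin\hat{\mathbb X}_{jS}$ — one exists since $\hat x_0\notin\prod_i\hat{\mathbb X}_{iS}$ — applying condition~1(b) of $V_j$ there (which forces the assembled $x_0$ into $\mathbb X_0\setminus\mathbb X_S$), and covering the remaining coordinates by the local conditions. For condition~2, the network output is $h(x)=[h_{11}(x_1);\dots;h_{NN}(x_N)]$, so in the infinity norm $\Vert h(x)-\hat h(\hat x)\Vert=\max_i\Vert h_{ii}(x_i)-\hat h_{ii}(\hat x_i)\Vert\le\max_i\Vert h_i(x_i)-\hat h_i(\hat x_i)\Vert$; letting $\ell$ attain this maximum and using condition~2 of $V_\ell$ gives $\alpha(\Vert h(x)-\hat h(\hat x)\Vert)\le\frac{\varpi}{\varpi_\ell}\alpha_\ell(\Vert h_\ell(x_\ell)-\hat h_\ell(\hat x_\ell)\Vert)\le\frac{\varpi}{\varpi_\ell}V_\ell(x_\ell,\hat x_\ell)\le\tilde V(x,\hat x)$.

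\textbf{Condition 3.} This is the core, and where the small-gain inequality~\eqref{compoquaninit} is used. Fix $x,\hat x$ with $\tilde V(x,\hat x)\le\varpi$, so $V_i(x_i,\hat x_i)\le\varpi_i$ for every $i$. For part~(a), take any $u=[u_1;\dots;u_N]$ and $x_d\in f(x,u)$; because the concrete interconnection matrix is $0_N$, the concrete internal inputs are pinned by $w_{ij}=h_{ji}(x_j)$ and $x_{id}\in f_i(x_i,u_i,w_i)$. For each pair $(i,j)$ with $j\in\mathrm{Pre}_I(i)$, I would choose $\hat w_{ij}$ to be a nearest point to $\hat h_{ji}(\hat x_j)$ in the grid $[\hat{\mathbb Y}_{ji}]_{\phi_{ij}}$; well-posedness of $\mathcal I_{\hat{\mathcal M}}$ places this grid inside $\hat{\mathbb W}_{ij}$ and the choice within distance $\phi_{ij}$ of $\hat h_{ji}(\hat x_j)$, making $\hat w_{ij}$ admissible for the abstract interconnection. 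Condition~2 of $V_j$ applied to the sub-block $h_{ji}$ of $h_j$ gives $\Vert h_{ji}(x_j)-\hat h_{ji}(\hat x_j)\Vert\le\Vert h_j(x_j)-\hat h_j(\hat x_j)\Vert\le\alpha_j^{-1}(\varpi_j)$, whence by the triangle inequality $\Vert w_{ij}-\hat w_{ij}\Vert\le\alpha_j^{-1}(\varpi_j)+\phi_{ij}\le\vartheta_i$, the final step being exactly~\eqref{compoquaninit}; for $j\notin\mathrm{Pre}_I(i)$ there is no coupling ($h_{ji}\equiv0$), so altogether $\Vert w_i-\hat w_i\Vert\le\vartheta_i$. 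Condition~3(a) of $V_i$ then provides $\hat u_i$ and $\hat x_{id}\in\hat f_i(\hat x_i,\hat u_i,\hat w_i)$ with $V_i(x_{id},\hat x_{id})\le\varpi_i$; concatenating yields $\hat u$ and $\hat x_d\in\hat f(\hat x,\hat u)$ — its admissibility being the point of the choice of $\hat w_{ij}$ above — with $\tilde V(x_d,\hat x_d)\le\varpi$. Part~(b) is the mirror image: from $\hat x_d\in\hat f(\hat x,\hat u)$ with internal inputs $\hat w_{ij}$ satisfying $\Vert\hat h_{ji}(\hat x_j)-\hat w_{ij}\Vert\le\phi_{ij}$, the concrete inputs $w_{ij}=h_{ji}(x_j)$ again obey $\Vert w_{ij}-\hat w_{ij}\Vert\le\alpha_j^{-1}(\varpi_j)+\phi_{ij}\le\vartheta_i$, so condition~3(b) of each $V_i$ applies and the resulting concrete transitions are concatenated.

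The main obstacle I anticipate is the internal-input reconciliation in condition~3: one has to produce abstract internal inputs $\hat w_{ij}$ that are at once compatible with the coupling tolerances $\phi_{ij}$ of the abstract interconnection $\hat{\mathcal M}$ and within the $\vartheta_i$-tolerances demanded by the local simulation functions, and it is the interplay of~\eqref{compoquaninit} with the output error estimate $\alpha_j^{-1}(\varpi_j)$ from condition~2 that makes both hold simultaneously. A lesser but still delicate point is condition~1(b), where the product shape of the secret sets forces one to commit to a witnessing non-secret subsystem before invoking the componentwise conditions.
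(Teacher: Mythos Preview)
Your proposal is correct and follows essentially the same approach as the paper: verify the three conditions of Definition~\ref{def:SFD2} coordinatewise, using the equivalence $\tilde V(x,\hat x)\le\varpi\iff V_i(x_i,\hat x_i)\le\varpi_i$ for all $i$ and, for condition~3, the key internal-input estimate $\Vert w_i-\hat w_i\Vert\le\max_{j\in\mathrm{Pre}_I(i)}\{\alpha_j^{-1}(\varpi_j)+\phi_{ij}\}\le\vartheta_i$ obtained from condition~2 of each $V_j$ together with~\eqref{compoquaninit}. Your companion function $\alpha(r)=\min_i\frac{\varpi}{\varpi_i}\alpha_i(r)$ differs from the paper's $\alpha=(\max_i\alpha_i^{-1})^{-1}=\min_i\alpha_i$ only by the harmless factors $\varpi/\varpi_i\ge 1$, and you are in fact more explicit than the paper both about how $\hat w_{ij}$ is selected in condition~3(a) and about the coordinate-splitting needed for condition~1(b), which the paper dispatches with ``proved in the same way.''
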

\begin{IEEEproof}
	First, we show that condition 1(a) in Definition \ref{def:SFD2} holds. Consider any $x_0=\intcc{x_{10};\ldots;x_{N0}}\in\mathbb X_0\cap \mathbb X_{S}$. For any subsystem $\Sigma_{i}$ and the corresponding $\varpi_i$-InitSOPSF  $V_i$, from the definition of $V_i$, we have $\forall x_{i0} \in \mathbb X_{i0} \cap \mathbb X_{iS}$, $\exists \hat x_{i0} \in \hat{\mathbb{X}}_{i0} \cap \hat{\mathbb{X}}_{iS}$: 
	$V_i(x_{i0},\hat x_{i0}) \leq \varpi_i$. Then, from the definition of $\tilde{V}$ as in $\eqref{defVinit}$ we get $\tilde{V}(x_0, \hat x_0) \leq \varpi$, where $\hat x_0=\intcc{\hat x_{10};\ldots;\hat x_{N0}}\in\hat{\mathbb{X}}_0 \cap \hat{\mathbb{X}}_{S}$. Thus, condition 1(a) in Definition \ref{def:SFD2} holds. 
	Condition 1(b) can be proved in the same way thus is omitted here. 
	Now, we show that condition 2 in Definition \ref{def:SFD2} holds for some $\mathcal{K}_{\infty}$ function $\alpha$.	
		Consider any $x=\intcc{x_1;\ldots;x_N}\in\mathbb X$ and $\hat x=\intcc{\hat x_1;\ldots;\hat x_N}\in\hat{\mathbb{X}}$. Then, using condition 2 in Definition \ref{def:SFD1}, one gets
	\begin{align*}\notag
	\Vert h(x)-\hat{h}(\hat{x})\Vert &=\max\limits_{i}\{\Vert h_{ii}(x_i)-\hat{h}_{ii}(\hat{x}_i)\Vert\}\\&\leq\max\limits_{i}\{\Vert h_{i}(x_i)-\hat{h}_{i}(\hat{x}_i)\Vert\}\leq\max\limits_{i}\{\alpha^{-1}_i(V_i(x_{i},\hat{x}_{i}))\}\leq \hat{\alpha}(\max\limits_{i}\{ \frac{\varpi}{\varpi_i} V_i(x_{i},\hat{x}_{i}) \}),
	\end{align*}
	where $\hat{\alpha}(s)=\max\limits_{i}\{\alpha^{-1}_i(s)\}$ for all $s \in \mathbb{R}_{\geq0}$. By defining $\alpha= \hat{\alpha}^{-1}$, one obtains
	\begin{align*}\notag
	\alpha(\Vert h(x)-\hat{h}(\hat{x})\Vert )\leq \tilde{V}(x,\hat{x}),
	\end{align*}
	which satisfies condition 2 in Definition \ref{def:SFD2}.
	Now, we show that condition 3 holds. Let us consider any 
	$x=\intcc{x_1;\ldots;x_N}\in\mathbb X$ and $\hat x=\intcc{\hat x_1;\ldots;\hat x_N}\in\hat{\mathbb{X}}$ such that $\tilde{V}(x,\hat x) \leq \varpi$. It can be seen that from the construction of $\tilde{V}$ in \eqref{defVinit}, we get $V_i(x_{i},\hat{x}_{i}) \leq \varpi_i$ holds, $\forall i \in[1;N]$. For each pair of subsystems $\Sigma_i$ and $\hat{\Sigma}_i$, the internal inputs satisfy the chain of inequality
		\begin{align*}
		\Vert w_i- \hat{w}_i\Vert =& \max\limits_{j \in \tup{Pre}_I(i)}\{\Vert w_{ij}- \hat{w}_{ij}\Vert \} = \max\limits_{j \in \tup{Pre}_I(i)}\{\Vert y_{ji}-\hat{y}_{ji}+\hat{y}_{ji}-\hat{w}_{ij}\Vert\} 	\leq \max\limits_{j \in \tup{Pre}_I(i)}\{\Vert y_{ji}-\hat{y}_{ji}\Vert + \phi_{ij}\}  \\
	\leq &\max\limits_{j \in \tup{Pre}_I(i)}\{\Vert h_{j}(x_j)\!\!-\!\!\hat{h}_{j}(\hat{x}_j)\Vert  + \phi_{ij}\} 
		\leq\max\limits_{j \in \tup{Pre}_I(i)}\{\alpha^{-1}_{j}( V_j(x_{j},\hat{x}_{j}))  + \phi_{ij}\}
			\leq\max\limits_{j \in \tup{Pre}_I(i)}\{\alpha^{-1}_{j}( \varpi_j)  + \phi_{ij}\}.		
		\end{align*}
	Using \eqref{compoquaninit}, one has $\Vert w_i- \hat{w}_i\Vert \leq \vartheta_i$.
	Therefore, by Definition \ref{def:SFD1} for each pair of subsystems $\Sigma_i$ and $\hat\Sigma_i$, one has $\forall u_i \in \mathbb U_i$ $\forall x_{id} \in f_i(x_i,u_i,w_i)$, there exists $\hat{u}_i\in \hat{\mathbb U}_i$ and $\hat{x}_{id} \in \hat{f}_i(\hat x_i,\hat u_i,\hat w_i)$ such that  $V_i(x_{id},\hat{x}_{id}) \leq \varpi_i$. 
	As a result, we get $\forall u=\intcc{u_{1};\ldots;u_{N}}\in\mathbb U$ $\forall x_d \in f(x,u)$, there exists $\hat u=\intcc{\hat u_{1};\ldots;\hat u_{N}}\in\hat{\mathbb{U}}$ and $\hat{x}_{d} \in \hat{f}(\hat{x},\hat{u})$ such that $\tilde{V}(x_d, \hat{x}_{d}) \Let\max\limits_{i}\{ \frac{\varpi}{\varpi_i} V_i(x_{i},\hat{x}_{i}) \} \leq \varpi$. 
	Therefore, condition 3(a) in Definition \ref{def:SFD2} is satisfied with $\varpi = \max\limits_{i} \varpi_i$. The proof of condition 3(b) uses the same reasoning as that of 3(a) and is omitted here. Therefore, we conclude that $\tilde{V}$ is a $\varpi$-approximate initial-state opacity-preserving simulation function from $\Sigma$ to $\hat \Sigma$. 
\end{IEEEproof}

\begin{remark}
Let us define $\phi_i = [\phi_{i1};\!\ldots\!;\phi_{iN}]$. Note that vectors $\phi_i$ serves later as the internal input quantization parameters
for the construction of finite abstractions of $\Sigma_i$ (see Subsection \ref{absCur}). Moreover, vector $\phi_{i}$ will be designed later in Theorem \ref{Main}.
\end{remark}

Next, we extend the results in Theorem \ref{thm:3} to the case of current-state opacity.

\begin{theorem}\label{thm:compocurSF}
	Consider an interconnected control system
$\Sigma=\mathcal{I}_{0_N}(\Sigma_1,\ldots,\Sigma_N)$ induced by
	$N\in\N_{\ge1}$
	control subsystems~$\Sigma_i$. Suppose that each $\Sigma_i$ admits an abstraction $\hat{\Sigma}_i$ together with a $\varpi_i$-CurSOPSF $V_i$, each associated with constants $\varpi_i,\vartheta_i \in \mathbb R_{\ge 0}$ and function $\alpha_i \in \mathcal{K_{\infty}}$ as in Definition \ref{def:SFDcur}. Let $\varpi = \max\limits_{i} \varpi_i$ and $\mathcal{\hat M} \in \mathbb{R}^{N \times N}$ be a matrix with elements $\{\mathcal{\hat M}\}_{ii} = 0,\{\mathcal{\hat M}\}_{ij} = \phi_{ij}, \forall i,j \in [1;N], i\neq j $, $0\leq\phi_{ij}\leq \boxspan(\mathbb{\hat Y}_{ji})$.  If $\forall i \in [1;N]$ and $\forall j \in \tup{Pre}_I(i)$, inequality \eqref{compoquaninit} holds, then the function defined in \eqref{defVinit} is a $\varpi$-CurSOPSF from $\Sigma$ to $\hat \Sigma={\mathcal{ I}_{\mathcal{\hat M}}}(\hat{\Sigma}_1,\ldots,\hat{\Sigma}_N)$. 
\end{theorem}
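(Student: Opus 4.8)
The plan is to mirror the proof of Theorem~\ref{thm:3} almost line by line for those ingredients that are common to Definition~\ref{def:SFD2} and Definition~\ref{def:SRDcur2} (namely conditions~1 and~2, and the two transition items 3(a) and 3(c) that carry no secret constraint), and then to treat separately the two secret-related transition items 3(b) and 3(d) of Definition~\ref{def:SRDcur2}, which have no counterpart in the initial-state case. Throughout I would use the product structure $\mathbb{X}_S=\prod_{i}\mathbb X_{iS}$ and $\hat{\mathbb X}_S=\prod_{i}\hat{\mathbb X}_{iS}$ inherited from Definition~\ref{interconnectedsystem}.

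First I would check condition~1 of Definition~\ref{def:SRDcur2}: for any $x_0=[x_{10};\ldots;x_{N0}]\in\mathbb X_0$, condition~1 of Definition~\ref{def:SFDcur} applied to each $\Sigma_i$ provides $\hat x_{i0}\in\hat{\mathbb X}_{i0}$ with $V_i(x_{i0},\hat x_{i0})\le\varpi_i$, and then $\tilde V(x_0,\hat x_0)\le\varpi$ by \eqref{defVinit}, where $\hat x_0=[\hat x_{10};\ldots;\hat x_{N0}]\in\hat{\mathbb X}_0$. Condition~2 is verified exactly as in Theorem~\ref{thm:3}: from condition~2 of Definition~\ref{def:SFDcur} one gets $\Vert h(x)-\hat h(\hat x)\Vert\le\max_i\alpha_i^{-1}(V_i(x_i,\hat x_i))\le\hat\alpha\bigl(\tilde V(x,\hat x)\bigr)$ with $\hat\alpha(s)=\max_i\alpha_i^{-1}(s)$, so $\alpha:=\hat\alpha^{-1}$ works. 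For condition~3, I would fix $x,\hat x$ with $\tilde V(x,\hat x)\le\varpi$, note that this forces $V_i(x_i,\hat x_i)\le\varpi_i$ for every $i$, and reproduce verbatim the chain of inequalities of Theorem~\ref{thm:3} bounding $\Vert w_i-\hat w_i\Vert$ by $\max_{j\in\tup{Pre}_I(i)}\{\alpha_j^{-1}(\varpi_j)+\phi_{ij}\}$, which is $\le\vartheta_i$ by \eqref{compoquaninit}. Consequently, for each pair $(\Sigma_i,\hat\Sigma_i)$ all four items of condition~3 in Definition~\ref{def:SFDcur} become available.

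Items 3(a) and 3(c) of Definition~\ref{def:SRDcur2} then follow by componentwise application of items 3(a) and 3(c) of Definition~\ref{def:SFDcur}, exactly as condition~3(a) (and its omitted counterpart) did in Theorem~\ref{thm:3}. For item~3(b): if $x_d=[x_{1d};\ldots;x_{Nd}]\in f(x,u)$ lies in $\mathbb X_S$, then $x_{id}\in\mathbb X_{iS}$ for \emph{every} $i$, so item~3(b) of Definition~\ref{def:SFDcur} yields $\hat u_i$ and $\hat x_{id}\in\hat f_i(\hat x_i,\hat u_i,\hat w_i)$ with $\hat x_{id}\in\hat{\mathbb X}_{iS}$ and $V_i(x_{id},\hat x_{id})\le\varpi_i$; concatenating, $\hat x_d\in\hat f(\hat x,\hat u)$ satisfies $\hat x_d\in\hat{\mathbb X}_S$ and $\tilde V(x_d,\hat x_d)\le\varpi$.

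The one genuinely new point — and the step I expect to need the most care — is item~3(d). Here $\hat x_d\in\hat f(\hat x,\hat u)$ with $\hat x_d\in\hat{\mathbb X}\setminus\hat{\mathbb X}_S$ does \emph{not} make every component non-secret; since $\hat{\mathbb X}_S=\prod_i\hat{\mathbb X}_{iS}$, it only guarantees the existence of some index $i^\star$ with $\hat x_{i^\star d}\notin\hat{\mathbb X}_{i^\star S}$. I would therefore apply item~3(d) of Definition~\ref{def:SFDcur} only to subsystem $i^\star$, obtaining $u_{i^\star}$ and $x_{i^\star d}\in f_{i^\star}(x_{i^\star},u_{i^\star},w_{i^\star})$ with $x_{i^\star d}\in\mathbb X_{i^\star}\setminus\mathbb X_{i^\star S}$ and $V_{i^\star}(x_{i^\star d},\hat x_{i^\star d})\le\varpi_{i^\star}$, and apply item~3(c) to every remaining index $i\ne i^\star$, obtaining $u_i$ and $x_{id}\in f_i(x_i,u_i,w_i)$ with $V_i(x_{id},\hat x_{id})\le\varpi_i$. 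The concatenation $x_d=[x_{1d};\ldots;x_{Nd}]\in f(x,u)$ then satisfies $x_d\in\mathbb X\setminus\mathbb X_S$, because its $i^\star$-component already lies outside $\mathbb X_{i^\star S}$, and $\tilde V(x_d,\hat x_d)=\max_i\frac{\varpi}{\varpi_i}V_i(x_{id},\hat x_{id})\le\varpi$, which is precisely item~3(d). This finishes the verification that $\tilde V$ in \eqref{defVinit} is a $\varpi$-CurSOPSF from $\Sigma$ to $\hat\Sigma$; the asymmetry between 3(b) (needs all components secret) and 3(d) (needs one component non-secret) is the only ingredient beyond the proof of Theorem~\ref{thm:3}.
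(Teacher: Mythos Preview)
Your proposal is correct and follows the paper's approach: the paper itself only verifies condition~1 explicitly and then dismisses conditions~2 and~3 with ``similar to Theorem~\ref{thm:3} and omitted here,'' whereas you spell out the full argument. In particular, your treatment of items~3(b) and~3(d)---exploiting the product structure $\mathbb X_S=\prod_i\mathbb X_{iS}$ so that ``$x_d\in\mathbb X_S$'' means \emph{every} component is secret while ``$\hat x_d\notin\hat{\mathbb X}_S$'' means \emph{some} component $i^\star$ is non-secret, and then mixing local~3(d) at $i^\star$ with local~3(c) elsewhere---is exactly the argument the paper leaves implicit and is the right way to handle it.
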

\begin{IEEEproof}
	First, we show that condition 1 in Definition \ref{def:SRDcur2} holds. Consider any $x_0=\intcc{x_{10};\ldots;x_{N0}}\in\mathbb X_0$. From Definition \ref{def:SFDcur}, for any subsystem $\Sigma_{i}$ and using the corresponding $\varpi_i$-CurSOPSF $V_i$, $\forall i\in[1;N]$, one has $\forall x_{i0} \in \mathbb X_{i0}$, $\exists \hat x_{i0} \in \hat {\mathbb X}_{i0}$, such that $V_i(x_{i0},\hat x_{i0}) \leq \varpi_i$. Then, from the definition of $\tilde{V}$ in $\eqref{defVinit}$, we get $\tilde{V}(x_0, \hat x_0) \leq \varpi$, where $\hat x_0=\intcc{\hat x_{10};\ldots;\hat x_{N0}}\in\hat{\mathbb{X}}_0$. Thus, condition 1 in Definition \ref{def:SRDcur2} holds. 
	The proof for conditions 2 and 3 in Definition \ref{def:SRDcur2} is similar to that of Theorem \ref{thm:3} and is omitted here. 	
\end{IEEEproof}
Next, we extend the results in Theorem \ref{thm:3} to the case of infinite-step opacity. 

\begin{theorem}\label{thm:compoinfSF}
	Consider an interconnected control system
$\Sigma=\mathcal{I}_{0_N}(\Sigma_1,\ldots,\Sigma_N)$ induced by
	$N\in\N_{\ge1}$
	control subsystems~$\Sigma_i$. Suppose that each $\Sigma_i$ admits an abstraction $\hat{\Sigma}_i$ together with a $\varpi_i$-InfSOPSF $V_i$, each associated with constants $\varpi_i,\vartheta_i \in \mathbb R_{\ge 0}$ and function $\alpha_i \in \mathcal{K_{\infty}}$ as in Definition \ref{def:SFDcur}. Let $\varpi = \max\limits_{i} \varpi_i$ and $\mathcal{\hat M} \in \mathbb{R}^{N \times N}$ be a matrix with elements $\{\mathcal{\hat M}\}_{ii} = 0,\{\mathcal{\hat M}\}_{ij} = \phi_{ij}, \forall i,j \in [1;N], i\neq j $, $0\leq\phi_{ij}\leq \boxspan(\mathbb{\hat Y}_{ji})$.   If $\forall i \in [1;N]$ and $\forall j \in \tup{Pre}_I(i)$, inequality \eqref{compoquaninit} holds, then the function defined in \eqref{defVinit} is a $\varpi$-InfSOPSF from $\Sigma$ to $\hat \Sigma={\mathcal{ I}_{\mathcal{\hat M}}}(\hat{\Sigma}_1,\ldots,\hat{\Sigma}_N)$.
\end{theorem}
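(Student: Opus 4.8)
The plan is to exploit the fact that, by Definition~\ref{def:SFDInf}, a $\varpi_i$-InfSOPSF is nothing more than a function which is simultaneously a $\varpi_i$-InitSOPSF and a $\varpi_i$-CurSOPSF from $\Sigma_i$ to $\hat\Sigma_i$, sharing the same associated constants $\varpi_i,\vartheta_i$ and the same $\mathcal{K}_\infty$ function $\alpha_i$; and, dually, by Definition~\ref{def:SRDInf2}, a $\varpi$-InfSOPSF from $\Sigma$ to $\hat\Sigma$ is exactly a function which is at once a $\varpi$-InitSOPSF and a $\varpi$-CurSOPSF between the same pair of systems. So the proof reduces to gluing Theorems~\ref{thm:3} and~\ref{thm:compocurSF} together.

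First I would record that the hypotheses of Theorem~\ref{thm:compoinfSF} are precisely the hypotheses of Theorem~\ref{thm:3} and of Theorem~\ref{thm:compocurSF} holding at the same time: each $V_i$ is a $\varpi_i$-InitSOPSF (resp.\ a $\varpi_i$-CurSOPSF) with exactly the listed data $\varpi_i,\vartheta_i,\alpha_i$, and the small-gain type inequality \eqref{compoquaninit} is assumed for every $i\in[1;N]$ and every $j\in\tup{Pre}_I(i)$. Applying Theorem~\ref{thm:3} then yields that $\tilde V$ as defined in \eqref{defVinit} is a $\varpi$-InitSOPSF from $\Sigma$ to $\hat\Sigma=\mathcal{I}_{\hat{\mathcal M}}(\hat\Sigma_1,\ldots,\hat\Sigma_N)$, while applying Theorem~\ref{thm:compocurSF} yields that the very same function $\tilde V$, with the same threshold $\varpi=\max_i\varpi_i$, is a $\varpi$-CurSOPSF from $\Sigma$ to the same interconnected abstraction $\hat\Sigma$. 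It is worth checking in passing that the two theorems produce literally the same object: both build $\tilde V$ through formula \eqref{defVinit}, both use the threshold $\varpi=\max_i\varpi_i$, and both certify condition~2 with the same comparison function $\alpha=\hat\alpha^{-1}$ where $\hat\alpha(s)=\max_i\alpha_i^{-1}(s)$, so the two conclusions are genuinely about one and the same $\tilde V$ and there is no clash.

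Finally I would invoke Definition~\ref{def:SRDInf2}: since $\tilde V$ has just been shown to be both a $\varpi$-InitSOPSF and a $\varpi$-CurSOPSF from $\Sigma$ to $\hat\Sigma$, it is by definition a $\varpi$-InfSOPSF from $\Sigma$ to $\hat\Sigma$, which is the claimed assertion. I do not expect a genuine obstacle here; the only substantive point is the bookkeeping observation that the ``infinite-step'' notion is the conjunction of the ``initial-state'' and ``current-state'' notions at both the subsystem and the network level, and the mild care needed is simply to confirm that both earlier theorems truly apply under the stated hypotheses and return the identical $\tilde V$ with compatible parameters.
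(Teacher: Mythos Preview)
Your proposal is correct and matches the paper's own approach: the paper simply states that the proof is similar to those of Theorems~\ref{thm:3} and~\ref{thm:compocurSF} and omits the details. Your argument, which reduces the claim to those two theorems via the observation that the infinite-step notion is the conjunction of the initial-state and current-state notions at both the subsystem and network levels, is exactly the intended reasoning.
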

\begin{IEEEproof}
	The proof is similar to those of Theorem \ref{thm:3} and Theorem \ref{thm:compocurSF} and is omitted here. 
\end{IEEEproof}

\section{Construction of Symbolic Models}\label{1:IV}
\label{initSym}
In this section, we consider each subsystem $\Sigma\!=(\mathbb X,\mathbb X_0,\mathbb X_S,\mathbb U,\mathbb W,\mathcal{U},\mathcal{W},f,\mathbb Y,h)$ as an infinite, deterministic control system with $\mathbb X_{0} = \mathbb X$. Note that throughout this section, we are mainly talking about subsystems rather than the overall network. However, for the sake of better readability, we often omit index $i$ of subsystems throughout the text in this section. 
We assume the output map $h$ of $\Sigma$ satisfies the following general Lipschitz assumption
\begin{align}\label{lipschitz}
\Vert h(x)-h(x')\Vert\leq \ell(\Vert x-x'\Vert),
\end{align}
for all $x,x'\in \mathbb X$, where $\ell\in\KK$. In addition, the existence of an opacity-preserving simulation function between $\Sigma$ and its finite abstraction is established under the assumption that $\Sigma$ is so-called incrementally input-to-state stable \cite{angeli,tran2019advances} as defined next.
\begin{definition}\label{ass:1}
	System $\Sigma\!=(\mathbb X,\mathbb X_0,\mathbb X_S,\mathbb U,\mathbb W,\mathcal{U},\mathcal{W},f,\mathbb Y,h)$ is called incrementally input-to-state stable ($\delta$-ISS) if there exists a function $ \mathcal{G}:\mathbb X \times \mathbb X \to \mathbb{R}_{\geq0} $  such that $\forall x,x'\in \mathbb X$, $\forall u,u'\in \mathbb U$, $\forall w,w' \in \mathbb W$, the inequalities
	\begin{IEEEeqnarray}{c}	\label{eq:ISTFC1}
		\underline{\alpha} (\Vert x-x'\Vert ) \leq \mathcal{G}(x,x')\leq \overline{\alpha}(\Vert x-x'\Vert ),\\ \label{eq:ISTFC2}
		\mathcal{G}(f(x,u,w),f(x',u',w'))\!-\!\mathcal{G}(x,x') \leq\!-\kappa(\mathcal{G}(x,x'))\!+\!\rho_{int}(\Vert w\!-\! w'\Vert)\!+\!\rho_{ext}(\Vert u\!-\! u'\Vert ),
	\end{IEEEeqnarray}
	hold for some $\underline{\alpha}, \overline{\alpha}, \kappa,\rho_{int},\rho_{ext} \in \mathcal{K}_{\infty}$.
\end{definition}
We additionally assume that there exists a function $\hat{\gamma}\in\mathcal{K}_{\infty}$ such that for any $x,x',x'' \in \mathbb{X}$,
\begin{align}\label{eq:TI}
\mathcal{G}(x,x')\leq \mathcal{G}(x,x'')+\hat{\gamma}(\Vert x'-x''\Vert),
\end{align}
for $\mathcal G$ defined in Definition \ref{ass:1}. Note that in most real applications, the state set $\mathbb{X}$ is a compact subset of $\mathbb{R}^n$ and, hence, condition \eqref{eq:TI} is not restrictive. Interested readers are referred to \cite{zamani2014symbolic} showing how to compute such a function $\hat\gamma$.

\subsection{Construction of finite abstractions}\label{absCur}
Now, we construct a finite abstraction of a $\delta$-ISS control system $\Sigma\!=(\mathbb X,\mathbb X,\mathbb X_S,\mathbb U,\mathbb W,\mathcal{U},\mathcal{W},f,\mathbb Y,h)$.
For the remaining of the paper, we assume that sets $\mathbb X$, $\mathbb X_S$, $\mathbb W$, and $\mathbb U$ are of the form of finite unions of boxes. Consider a concrete control system $\Sigma$ and a tuple $q = (\eta, \theta, \mu, \phi)$ of parameters, where $0 < \eta \leq \tup{min} \{span(\mathbb X_S),span(\mathbb{X} \setminus \mathbb X_S)\}$ is the state set quantization, $0<  \mu  < span (\mathbb{U})$ is the external input set quantization, $\phi$ is a vector containing the internal input set quantization parameters, where $0< \Vert \phi \Vert \leq span (\mathbb{W})$, and $\theta \in \mathbb R_{\ge 0}$ is a design parameter. 
Now a finite abstraction can be defined as
$$\hat{\Sigma}=(\hat{\mathbb{X}},\hat{\mathbb X}_0, \hat{\mathbb X}_S, \hat{\mathbb{U}},\hat{\mathbb{W}},\hat{\mathcal{U}},\hat{\mathcal{W}},\hat{f},\hat{\mathbb{Y}},\hat{h}),$$
where $\hat{\mathbb{X}} = \hat{\mathbb X}_0 = [\mathbb X]_{\eta}$, $\hat{\mathbb X}_S = [\mathbb X_S^{\theta}]_{\eta}$, $\hat{\mathbb{U}} = [\mathbb U]_{\mu}$, $\hat{\mathbb{W}} = [\mathbb W]_{\phi}$, $\hat{\mathbb{Y}} = \{h(\hat x)|\hat x \in \hat{\mathbb{X}}\}$, $\hat{h}(\hat x) = h(\hat x)$, $\forall \hat x \in \hat{\mathbb{X}}$, and 
\begin{align}\label{tranrela}
\hat x_d \in \hat f(\hat x,\hat u, \hat w) \quad \tup{if and only if} \quad \Vert \hat x_d- f(\hat x,\hat u, \hat w) \Vert \leq \eta .
\end{align}

Next, we establish the relation between $\Sigma$ and $\hat{\Sigma}$ via the introduced notions of opacity-preserving simulation functions. 

\subsection{Construction of opacity-preserving simulation functions}
In this subsection, we show that if a finite abstraction $\hat{\Sigma}$ of a $\delta$-ISS $\Sigma$ is constructed with the tuple $q = (\eta, \theta, \mu, \phi)$ of parameters satisfying some conditions, then function $\mathcal{G}$ in Definition \ref{ass:1} is an initial-state (resp. current-state, infinite-step) opacity-preserving simulation function from $\Sigma$ to $\hat \Sigma$.

\begin{theorem}\label{thm:2}
	Let $\Sigma\!=(\mathbb X,\mathbb X,\mathbb X_S,\mathbb U,\mathbb W,\mathcal{U},\mathcal{W},f,\mathbb Y,h)$ be a $\delta$-ISS control system as in Definition \ref{ass:1} with function $\mathcal{G}$ satisfying \eqref{eq:ISTFC1}-\eqref{eq:TI} with corresponding functions $\underline{\alpha}, \overline{\alpha}, \kappa,\rho_{int},\rho_{ext},\hat{\gamma}$. Consider  parameters $\varpi,\vartheta \in \mathbb R_{\ge 0}$. Let $\hat{\Sigma}$ be a finite abstraction as constructed in Subsection \ref{absCur}, 
	with a tuple $q = (\eta, 0, \mu, \phi)$ satisfying 
	\begin{align} \label{secquantinit}	
	\eta \leq \min\{\hat{\gamma}^{-1}[\kappa (\varpi)-\rho_{int}(\vartheta)-\rho_{ext}(\mu)],  \overline{\alpha}^{-1}(\varpi)\}.
	\end{align}
	Then, $\mathcal{G}$ is a $\varpi$-InitSOPSF from $\Sigma$ to $\hat \Sigma$.
\end{theorem}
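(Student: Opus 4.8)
The plan is to verify the three conditions of Definition~\ref{def:SFD1} for the candidate $V=\mathcal{G}$, exploiting the explicit structure of $\hat\Sigma$ from Subsection~\ref{absCur} specialized to $\theta=0$: here $\hat{\mathbb X}=\hat{\mathbb X}_0=[\mathbb X]_\eta$, $\hat{\mathbb X}_S=[\mathbb X_S]_\eta=[\mathbb{R}^n]_\eta\cap\mathbb X_S$, $\hat h=h$ on $\hat{\mathbb X}$, and transitions are given by \eqref{tranrela}. I would also use repeatedly that, since $\eta$ and $\mu$ do not exceed $\boxspan(\mathbb X_S)$ and $\boxspan(\mathbb U)$ respectively, every point of $\mathbb X_S$ (resp.\ $\mathbb U$) lies within distance $\eta$ (resp.\ $\mu$) in the infinity norm of some grid point of $[\mathbb X_S]_\eta$ (resp.\ $[\mathbb U]_\mu$), and likewise any point of $\mathbb X$ is within $\eta$ of a point of $[\mathbb X]_\eta$.

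Condition~2 follows at once: since $\hat h=h$, the Lipschitz bound \eqref{lipschitz} together with the lower bound in \eqref{eq:ISTFC1} gives $\Vert h(x)-\hat h(\hat x)\Vert\le\ell(\Vert x-\hat x\Vert)\le\ell(\underline{\alpha}^{-1}(\mathcal{G}(x,\hat x)))$, so condition~2 holds with $\alpha:=\underline{\alpha}\circ\ell^{-1}\in\mathcal{K}_\infty$. For condition~1 the choice $\theta=0$ is what makes the two implications work. For 1(a), given $x_0\in\mathbb X_0\cap\mathbb X_S=\mathbb X_S$, I pick $\hat x_0\in[\mathbb X_S]_\eta=\hat{\mathbb X}_0\cap\hat{\mathbb X}_S$ with $\Vert x_0-\hat x_0\Vert\le\eta$; then by \eqref{eq:ISTFC1} and the bound $\eta\le\overline{\alpha}^{-1}(\varpi)$ from \eqref{secquantinit}, $\mathcal{G}(x_0,\hat x_0)\le\overline{\alpha}(\eta)\le\varpi$. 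For 1(b), any $\hat x_0\in\hat{\mathbb X}_0\setminus\hat{\mathbb X}_S$ is a grid point of $\mathbb X$ that, because $\hat{\mathbb X}_S=[\mathbb{R}^n]_\eta\cap\mathbb X_S$, does not lie in $\mathbb X_S$; hence $\hat x_0\in\mathbb X_0\setminus\mathbb X_S$, and taking $x_0=\hat x_0$ gives $\mathcal{G}(x_0,\hat x_0)\le\overline{\alpha}(0)=0\le\varpi$.

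The substantive step is condition~3. Fix $(x,\hat x)$ with $\mathcal{G}(x,\hat x)\le\varpi$ and $w\in\mathbb W$, $\hat w\in\hat{\mathbb W}$ with $\Vert w-\hat w\Vert\le\vartheta$. For 3(a), given $u\in\mathbb U$ and, by determinism, $x_d=f(x,u,w)$, I choose $\hat u\in[\mathbb U]_\mu$ with $\Vert u-\hat u\Vert\le\mu$ and then $\hat x_d\in[\mathbb X]_\eta$ with $\Vert\hat x_d-f(\hat x,\hat u,\hat w)\Vert\le\eta$, so that $\hat x_d\in\hat f(\hat x,\hat u,\hat w)$ by \eqref{tranrela}. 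Chaining \eqref{eq:TI} and then \eqref{eq:ISTFC2},
\begin{align*}
\mathcal{G}(x_d,\hat x_d)&\le\mathcal{G}\big(f(x,u,w),f(\hat x,\hat u,\hat w)\big)+\hat{\gamma}(\eta)\\
&\le\mathcal{G}(x,\hat x)-\kappa(\mathcal{G}(x,\hat x))+\rho_{int}(\vartheta)+\rho_{ext}(\mu)+\hat{\gamma}(\eta),
\end{align*}
and since \eqref{secquantinit} yields $\rho_{int}(\vartheta)+\rho_{ext}(\mu)+\hat{\gamma}(\eta)\le\kappa(\varpi)$, we obtain $\mathcal{G}(x_d,\hat x_d)\le\mathcal{G}(x,\hat x)-\kappa(\mathcal{G}(x,\hat x))+\kappa(\varpi)\le\varpi$, where the last inequality uses $\mathcal{G}(x,\hat x)\le\varpi$ and the monotonicity of $\mathcal{I}_d-\kappa$. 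Condition 3(b) is handled identically, now taking $u=\hat u\in[\mathbb U]_\mu\subseteq\mathbb U$ (so $\Vert u-\hat u\Vert=0\le\mu$) and $x_d=f(x,u,w)$, and running the same estimate.

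I expect the only genuine obstacle to be this last computation in condition~3: one must apply \eqref{eq:TI} to absorb the state-discretization error into $\hat{\gamma}(\eta)$, then \eqref{eq:ISTFC2} to extract the contraction together with the external- and internal-input mismatch terms, and finally spend the quantization budget \eqref{secquantinit} in the right order — and, in particular, invoke that $\kappa$ does not grow faster than the identity so that $\mathcal{G}(x,\hat x)-\kappa(\mathcal{G}(x,\hat x))+\kappa(\varpi)\le\varpi$ rather than only a weaker bound. Everything else is bookkeeping once the set structure of $\hat\Sigma$ with $\theta=0$ has been unpacked.
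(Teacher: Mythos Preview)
Your proof is correct and follows essentially the same route as the paper: verify condition~2 via the Lipschitz bound and \eqref{eq:ISTFC1}, handle 1(a) by picking a nearby grid point in $[\mathbb X_S]_\eta$ and 1(b) by taking $x_0=\hat x_0$ (which works precisely because $\theta=0$), and prove condition~3 by chaining \eqref{eq:TI} then \eqref{eq:ISTFC2} and spending the budget \eqref{secquantinit}. You are in fact slightly more explicit than the paper in flagging that the final step $\mathcal{G}(x,\hat x)-\kappa(\mathcal{G}(x,\hat x))+\kappa(\varpi)\le\varpi$ relies on $\mathcal{I}_d-\kappa$ being nondecreasing; the paper uses this silently when writing $(\mathcal{I}_d-\kappa)(\varpi)$ in \eqref{sos}.
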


\begin{proof} We start by proving condition 1 in Definition \ref{def:SFD1}. Consider any initial and secret state $x_0 \in {\mathbb X}_0 \cap {\mathbb X}_S$ in $\Sigma$. Since $\eta\leq span(\mathbb X_S)$, $\mathbb X_S\subseteq\bigcup_{p\in[\mathbb X_S]_{\eta}}\mathcal{B}_{\eta}(p)$, then for every $x\in {\mathbb X}_S$ there always exists $\hat x \in \hat {\mathbb X}_S$ such that $\Vert x-\hat x\Vert\leq\eta$. 
	Hence, there exists $\hat x_0 \in \hat{\mathbb X}_0 \cap \hat{\mathbb X}_S$ with $\mathcal{G}(x_0,\hat x_0) \leq \overline{\alpha} (\Vert x_0-\hat x_0\Vert ) \leq \overline{\alpha}(\eta)$ by \eqref{eq:ISTFC1}, and condition 1(a) in Definition \ref{def:SFD1} is satisfied with $\varpi \geq \overline{\alpha}(\eta)$ by \eqref{secquantinit}. For every $\hat x_0 \in \hat {\mathbb X}_0 \setminus \hat {\mathbb X}_S$, by choosing $x_0 = \hat x_0$ which is also inside ${\mathbb X}_0 \setminus {\mathbb X}_S$, we get $\mathcal{G}(x_0,\hat x_0) = 0 \leq \varpi$. Hence, condition 1(b) in Definition \ref{def:SFD1} holds as well.
	Next, we show that condition 2 in Definition \ref{def:SFD1} holds.
	Since $\Sigma$ is incrementally input-to-state stable as in \eqref{eq:ISTFC1},  and given the Lipschitz assumption on $h$, $\forall x\in \mathbb{X}$ and $ \forall \hat{x} \in \mathbb{\hat{X}}
	$, we have 
	\begin{align}\notag
	\Vert h(x)-\hat{h}(\hat{x})\Vert\leq \ell(\Vert x-\hat{x}\Vert)\leq\ell\circ\underline{\alpha}^{-1}(\mathcal{G}(x,\hat{x})).
	\end{align}	
    Let us define $\alpha=(\ell\circ\underline{\alpha}^{-1})^{-1}$. Then one obtains that condition 2 in Definition \ref{def:SFD1} is satisfied with
	\begin{align}\notag
	\alpha (\Vert h(x)-\hat{h}(\hat{x})\Vert )\leq \mathcal{G}(x,\hat{x}).
	\end{align}	
	Now we show condition 3 in Definition \ref{def:SFD1}. From \eqref{eq:TI}, $\forall x\in \mathbb{X}, \forall \hat{x} \in \mathbb{\hat{X}}, \forall u\in\mathbb{{U}}, \forall \hat{u} \in \mathbb{\hat{U}},\forall w \in \mathbb{W},\forall \hat{w} \in \mathbb{\hat{W}}$, we have for any $\hat{x}_{d}\in\hat{f}(\hat{x},\hat{u},\hat{w})$ 
	\begin{align*}
	\mathcal{G}(x_d,\hat x_d)) \leq\mathcal{G}(x_d,f(\hat{x}, \hat{u},\hat{w})) +\hat{\gamma}(\Vert \hat x_d - f(\hat{x},\hat{u},\hat{w})\Vert),
	\end{align*}
	where\footnote{In this section, we assume that $\Sigma$ is deterministic.} $x_d = f(x,u,w)$.
	From the structure of abstraction, the above inequality reduces to
	\begin{align*}
	\mathcal{G}&(x_d,\hat x_d)\leq\mathcal{G}(x_d,f(\hat{x},\hat{u},\hat{w}))+\hat{\gamma}(\eta).
	\end{align*}
	Note that by \eqref{eq:ISTFC2}, we get 
	\begin{align*}
	\mathcal{G}&(x_d,f(\hat{x},\hat{u},\hat{w}))-\mathcal{G}(x,\hat{x})\leq-\kappa(\mathcal{G}(x,\hat{x}))+\rho_{ext}(\Vert u-\hat u\Vert)+\rho_{int}(\Vert w- \hat{w}\Vert ).
	\end{align*}	
	Hence, $\forall x\in \mathbb{X}, \forall \hat{x} \in \mathbb{\hat{X}}, \forall u\in\mathbb{{U}}, \forall \hat{u} \in \mathbb{\hat{U}}$, $\forall w \in \mathbb{W},\forall \hat{w} \in \mathbb{\hat{W}}
	$, one obtains
	\begin{align}\label{inqual}
	\mathcal{G}&(x_d,\hat x_d)-\mathcal{G}(x,\hat{x})\leq-\kappa(\mathcal{G}(x,\hat{x}))+\rho_{ext}(\Vert u-\hat u\Vert )+\rho_{int}(\Vert w- \hat{w}\Vert)+\hat{\gamma}(\eta),
	\end{align}
	for any $\hat{x}_{d}\in\hat{f}(\hat{x},\hat{u},\hat{w})$. Now, we show condition 3(a) in Definition \ref{def:SFD1}. Let us consider any $x \in \mathbb{X}$ and any $\hat x \in \hat {\mathbb{X}}$ satisfying $\mathcal{G}(x,\hat x) \leq \varpi$, and any $w\in\mathbb{{W}}$ and $\hat w$ such that $\Vert \hat{w}-w\Vert \leq \vartheta$. By the structure of $\hat{\mathbb{U}} = [\mathbb U]_{\mu}$, for any $u\in\mathbb{{U}}$, there always exists $\hat{u}$ satisfying $\Vert \hat{u}-u\Vert \leq \mu$. By combining \eqref{inqual} with \eqref{secquantinit}, for any ${x}_{d} = {f}({x},{u},{w})$ and any 
	$\hat{x}_{d}\in\hat{f}(\hat{x},\hat{u},\hat{w})$, the following inequality holds: 
		\begin{align}\label{sos}
		\mathcal{G}(x_d,\hat x_d) \leq (\mathcal{I}_d-\kappa)(\varpi)+\rho_{ext}(\mu)+\rho_{int}(\vartheta)+\hat{\gamma}(\eta) \leq \varpi.
		\end{align}
%
			Hence, condition 3(a) is satisfied. Similarly, for any $\hat{u}$, by choosing $u=\hat{u}$, for any $\hat{x}_{d}\in\hat{f}(\hat{x},\hat{u},\hat{w})$, condition 3(b) in Definition \ref{def:SFDcur} is also satisfied with $\mathcal{G}(x_d,\hat x_d) \leq (\mathcal{I}_d-\kappa)(\varpi)+\rho_{int}(\vartheta)+\hat{\gamma}(\eta) \leq \varpi$, where ${x}_{d} = {f}({x},{u},{w})$. 
	Therefore, we conclude that $\mathcal{G}$ is a $\varpi$-InitSOPSF from $\Sigma$ to $\hat \Sigma$.	
\end{proof}	
Next, we show a similar result as in Theorem \ref{thm:2}, but for current-state opacity.
\begin{theorem}\label{thm:cursimufunc}
	Let $\Sigma$ be a $\delta$-ISS control system as in Definition \ref{ass:1} with function $\mathcal{G}$ satisfying \eqref{eq:ISTFC1}-\eqref{eq:TI} with corresponding functions $\underline{\alpha}, \overline{\alpha}, \kappa,\rho_{int},\rho_{ext},\hat{\gamma}$. Consider parameters $\varpi,\vartheta \in \mathbb R_{\ge 0}$. Let $\hat{\Sigma}$ be a finite abstraction as constructed in Subsection \ref{absCur}, with a tuple $q = (\eta, \theta, \mu, \phi)$ of parameters satisfying 
\begin{align} \label{secquant}	
\eta \leq \min\{\hat{\gamma}^{-1}[\kappa (\varpi)-\rho_{int}(\vartheta)-\rho_{ext}(\mu)],  \overline{\alpha}^{-1}(\varpi)\} ;  \\ \label{quantconst}
\underline{\alpha}^{-1}(\varpi) \leq \theta.
\end{align}
Then, $\mathcal{G}$ is a $\varpi$-CurSOPSF from ${\Sigma}$ to $\hat{\Sigma}$.
\end{theorem}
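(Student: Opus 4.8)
The plan is to follow, almost line for line, the proof of Theorem~\ref{thm:2} for the conditions that a $\varpi$-CurSOPSF shares with a $\varpi$-InitSOPSF, and then to establish separately the two new secret-consistency conditions 3(b) and 3(d) of Definition~\ref{def:SFDcur}, which are precisely where the design parameter $\theta$ and the extra hypothesis \eqref{quantconst} are used.

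First I would verify condition~1: since $\mathbb X_0=\mathbb X$ and $\hat{\mathbb X}_0=[\mathbb X]_\eta$, every $x_0\in\mathbb X_0$ admits a lattice point $\hat x_0\in\hat{\mathbb X}_0$ with $\Vert x_0-\hat x_0\Vert\le\eta$, so \eqref{eq:ISTFC1} and the bound $\eta\le\overline\alpha^{-1}(\varpi)$ from \eqref{secquant} give $\mathcal G(x_0,\hat x_0)\le\overline\alpha(\eta)\le\varpi$. Condition~2 is identical to the one in Theorem~\ref{thm:2}: the Lipschitz assumption \eqref{lipschitz} together with \eqref{eq:ISTFC1} yields $\Vert h(x)-\hat h(\hat x)\Vert\le\ell\circ\underline\alpha^{-1}(\mathcal G(x,\hat x))$, so one may take $\alpha=(\ell\circ\underline\alpha^{-1})^{-1}$. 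For the forward-reachability conditions 3(a) and 3(c) I would re-use verbatim the chain leading to \eqref{sos}: from \eqref{eq:TI}, the abstraction relation \eqref{tranrela}, and \eqref{eq:ISTFC2} one gets, for every $\hat x_d\in\hat f(\hat x,\hat u,\hat w)$ with $x_d=f(x,u,w)$,
\[
\mathcal G(x_d,\hat x_d)\le(\mathcal I_d-\kappa)(\varpi)+\rho_{ext}(\Vert u-\hat u\Vert)+\rho_{int}(\vartheta)+\hat\gamma(\eta),
\]
which is $\le\varpi$ by the first inequality in \eqref{secquant}, choosing a grid input $\hat u$ with $\Vert u-\hat u\Vert\le\mu$ in 3(a) and $u=\hat u$ in 3(c).

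The core of the argument is conditions 3(b) and 3(d). For 3(b), suppose $x_d=f(x,u,w)\in\mathbb X_S$ and let $\hat x_d$ be the successor produced in 3(a), so $\mathcal G(x_d,\hat x_d)\le\varpi$; then \eqref{eq:ISTFC1} and \eqref{quantconst} give $\Vert x_d-\hat x_d\Vert\le\underline\alpha^{-1}(\varpi)\le\theta$, hence $\hat x_d\in\mathbb X_S\oplus\mathcal B_\theta=\mathbb X_S^\theta$; since $\hat x_d\in\hat{\mathbb X}\subseteq[\R^n]_\eta$, it follows that $\hat x_d\in[\R^n]_\eta\cap\mathbb X_S^\theta=[\mathbb X_S^\theta]_\eta=\hat{\mathbb X}_S$, as required. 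For 3(d), given $\hat x_d\in\hat{\mathbb X}\setminus\hat{\mathbb X}_S$, choose $u=\hat u$ and $x_d=f(x,\hat u,w)\in\mathbb X$; the same bound gives $\mathcal G(x_d,\hat x_d)\le\varpi$ and $\Vert x_d-\hat x_d\Vert\le\theta$, and if $x_d$ belonged to $\mathbb X_S$ the previous computation would force $\hat x_d\in\hat{\mathbb X}_S$, a contradiction, so $x_d\in\mathbb X\setminus\mathbb X_S$.

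The only non-mechanical point — the ``hard part'' — is matching the size of the inflation $\theta$, the definition $\hat{\mathbb X}_S=[\mathbb X_S^\theta]_\eta$, and the lower comparison function $\underline\alpha$ so that the Lyapunov-type bound $\mathcal G\le\varpi$ on successors is converted into the exact set memberships $\hat x_d\in\hat{\mathbb X}_S$ and its contrapositive; this is exactly what \eqref{quantconst} encodes. It is worth checking the degenerate case in which $\mathbb X_S^\theta$ protrudes beyond $\mathbb X$: this causes no trouble, since $\hat{\mathbb X}_S$ is obtained solely by intersecting $\mathbb X_S^\theta$ with the lattice and the matched successor $\hat x_d$ always lies on that lattice. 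Everything else is a direct transcription of the proof of Theorem~\ref{thm:2}.
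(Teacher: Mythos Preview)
Your proposal is correct and follows essentially the same route as the paper's own proof: condition~1 via $\eta\le\overline\alpha^{-1}(\varpi)$, conditions~2, 3(a), 3(c) by replaying Theorem~\ref{thm:2}, and condition~3(b) by converting $\mathcal G(x_d,\hat x_d)\le\varpi$ into $\Vert x_d-\hat x_d\Vert\le\underline\alpha^{-1}(\varpi)\le\theta$ to land in $\hat{\mathbb X}_S=[\mathbb X_S^\theta]_\eta$. Your explicit contrapositive for 3(d) and the remark on $\mathbb X_S^\theta$ possibly extending past $\mathbb X$ are slightly more detailed than what the paper writes (it dispatches 3(d) with ``proved similarly''), but the underlying argument is identical.
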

\begin{proof} We start by proving condition 1 in Definition \ref{def:SFDcur}. Since $\hat{\mathbb{X}} = \hat{\mathbb X}_0 = [\mathbb X]_{\eta} = [\mathbb X_0]_{\eta}$, $\mathbb X_0 \subseteq\bigcup_{p\in\hat{\mathbb X}_0}\mathcal{B}_{\eta}(p)$, then for every initial state $x_0 \in \mathbb X_0$ in $\Sigma$ there always exists $\hat x_{0} \in \hat{\mathbb X}_0$ in $\widehat{\Sigma}$ such that $\Vert \hat x_0-x_0\Vert \leq \eta$. Hence, one gets $\mathcal{G}(x_0,\hat x_0) \leq \overline{\alpha} (\Vert x_0-\hat x_0\Vert) \leq \overline{\alpha}(\eta)$ by \eqref{eq:ISTFC1}, and by using \eqref{secquant} condition 1 in Definition \ref{def:SFDcur} is satisfied with $\varpi \geq \overline{\alpha}(\eta)$. The proof for conditions 2, 3(a), and 3(c) in Definition \ref{def:SFDcur} is similar to that of Theorem \ref{thm:2}, and is omitted here.

For condition 3(b), let us consider any $u\in\mathbb{{U}}$ s.t.  ${x}_{d} = {f}({x},{u},{w}) \in \mathbb X_S$. Again, by choosing any $\hat{u}$ satisfying $\Vert \hat{u}-u\Vert \leq \mu$, we obtain $\mathcal{G}(x_d,\hat x_d) \leq \varpi$. Additionally, by \eqref{eq:ISTFC1} one gets
 \begin{align}
 \Vert x_d-\hat x_d \Vert \leq \underline{\alpha}^{-1} (\mathcal{G}(x_d,\hat x_d)) \leq \underline{\alpha}^{-1}(\varpi).
 \end{align}
As one can see from the structure of the abstraction, where $\hat{\mathbb X}_S = [\mathbb X_S^{\theta}]_{\eta}$ and using $\theta \geq \underline{\alpha}^{-1}(\varpi)$ in \eqref{quantconst}, from ${x}_{d} \in \mathbb X_S$ one concludes that $\hat x_d \in \hat {\mathbb X}_S$, which shows that condition 3(b) holds as well. Condition 3(d) can be proved similarly, which shows that $\mathcal{G}$ is a $\varpi$-approximate current-state opacity-preserving simulation function from ${\Sigma}$ to $\hat{\Sigma}$.	
\end{proof}
Next, we show a similar result as in Theorem \ref{thm:cursimufunc}, but for infinite-step opacity.

\begin{theorem}\label{thm:infsimufunc}
	Let $\Sigma$ be a $\delta$-ISS control system as in Definition \ref{ass:1} with function $\mathcal{G}$ satisfying \eqref{eq:ISTFC1}-\eqref{eq:TI} with corresponding functions $\underline{\alpha}, \overline{\alpha}, \kappa,\rho_{int},\rho_{ext},\hat{\gamma}$. Consider parameters $\varpi,\vartheta \in \mathbb R_{\ge 0}$. Let $\hat{\Sigma}$ be a finite abstraction as constructed in Subsection \ref{absCur}, with a tuple $q = (\eta, \theta, \mu, \phi)$ of parameters satisfying \eqref{secquant} and \eqref{quantconst}.
	Then, $\mathcal{G}$ is a $\varpi$-InfSOPSF from ${\Sigma}$ to $\hat{\Sigma}$.
\end{theorem}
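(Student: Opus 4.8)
The plan is to lean on Definition~\ref{def:SFDInf}, which declares a $\varpi$-InfSOPSF to be nothing but a function that is simultaneously a $\varpi$-InitSOPSF and a $\varpi$-CurSOPSF. So it suffices to establish these two properties separately for $\mathcal{G}$, given that the tuple $q=(\eta,\theta,\mu,\phi)$ satisfies both \eqref{secquant} and \eqref{quantconst}.

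The current-state half is immediate: it is precisely the statement of Theorem~\ref{thm:cursimufunc}, whose hypotheses are exactly \eqref{secquant} and \eqref{quantconst}, so nothing new is required. For the initial-state half, I would rerun the proof of Theorem~\ref{thm:2} and check that replacing $\theta=0$ by a general $\theta$ satisfying \eqref{quantconst} causes no harm. Conditions~2 and~3 of Definition~\ref{def:SFD1} never mention the secret sets, and their verification in Theorem~\ref{thm:2} uses only the Lipschitz and $\delta$-ISS estimates together with the first bound in \eqref{secquant} (which coincides with \eqref{secquantinit}); hence these carry over verbatim. For condition~1(a), since $\theta\geq 0$ we have $\mathbb{X}_S\subseteq\mathbb{X}_S^{\theta}$, so $\hat{\mathbb{X}}_S=[\mathbb{X}_S^{\theta}]_\eta\supseteq[\mathbb{X}_S]_\eta$, and the abstract secret initial state exhibited in the proof of Theorem~\ref{thm:2} (with $\mathcal{G}(x_0,\hat x_0)\leq\overline{\alpha}(\eta)\leq\varpi$ by \eqref{eq:ISTFC1} and \eqref{secquant}) still lies in $\hat{\mathbb{X}}_0\cap\hat{\mathbb{X}}_S$. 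For condition~1(b), pick any $\hat x_0\in\hat{\mathbb{X}}_0\setminus\hat{\mathbb{X}}_S$; being a grid point of $\mathbb{X}$ that is not in $[\mathbb{X}_S^{\theta}]_\eta$, it cannot lie in $\mathbb{X}_S^{\theta}\supseteq\mathbb{X}_S$, hence $\hat x_0\in\mathbb{X}_0\setminus\mathbb{X}_S$ (recall $\mathbb{X}_0=\mathbb{X}$); taking $x_0=\hat x_0$ yields $\mathcal{G}(x_0,\hat x_0)=0\leq\varpi$. Thus $\mathcal{G}$ is a $\varpi$-InitSOPSF, and together with the current-state half, a $\varpi$-InfSOPSF from $\Sigma$ to $\hat{\Sigma}$.

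The one place deserving attention — the main (mild) obstacle — is the interplay between the inflated abstract secret set $\hat{\mathbb{X}}_S=[\mathbb{X}_S^{\theta}]_\eta$, which the current-state conditions 3(b)/3(d) genuinely need via \eqref{quantconst}, and the initial-state condition 1(b), which instead asks non-secret abstract initial states to be matched by non-secret concrete ones. The monotonicity $\mathbb{X}_S\subseteq\mathbb{X}_S^{\theta}$ is exactly what reconciles the two requirements, so there is no real tension and the two earlier theorems can simply be combined.
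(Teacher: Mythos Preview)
Your proposal is correct and follows essentially the same route as the paper's own proof: invoke Theorem~\ref{thm:cursimufunc} for the $\varpi$-CurSOPSF half, then verify only condition~1 of Definition~\ref{def:SFD1} for the $\varpi$-InitSOPSF half (conditions~2 and~3 of Definition~\ref{def:SFD1} are subsumed by the CurSOPSF conditions already established). Your treatment of condition~1(b) is in fact slightly more careful than the paper's, which simply asserts that $x_0=\hat x_0\in\mathbb{X}_0\setminus\mathbb{X}_S$ without spelling out why a grid point outside $[\mathbb{X}_S^{\theta}]_\eta$ must lie outside $\mathbb{X}_S$; your use of $\mathbb{X}_S\subseteq\mathbb{X}_S^{\theta}$ makes this explicit.
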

\begin{proof}
First, note that satisfying conditions \eqref{secquant} and \eqref{quantconst} implies that $\mathcal{G}$ is a current-state opacity-preserving simulation function from ${\Sigma}$ to $\hat{\Sigma}$. The proof is left with showing condition 1 in Definition \ref{def:SFD1} for $\varpi$-InitSOPSF.

Note that by the structure of the abstraction, we have $\hat{\mathbb X}_S = [\mathbb X_S^{\theta}]_{\eta}$, where $\theta \geq \underline{\alpha}^{-1}(\varpi)$ from \eqref{quantconst}.
Consider any initial and secret state $x_0 \in {\mathbb X}_0 \cap {\mathbb X}_S$ in $\Sigma$. Since $\eta\leq span(\mathbb X_S)$, $\mathbb X_S\subseteq\bigcup_{p\in[\mathbb X_S]_{\eta}}\mathcal{B}_{\eta}(p) \subseteq \hat{\mathbb X}_S$, then for every $x\in {\mathbb X_S}$ there always exists $\hat x\in \hat {\mathbb X}_{S}$ such that $\Vert x-\hat x\Vert\leq\eta$. 
Hence, one has $\mathcal{G}(x_0,\hat x_0) \leq \overline{\alpha} (\Vert x_0-\hat x_0\Vert )\leq\eta$ by \eqref{eq:ISTFC1}. Thus, by using \eqref{secquant}, one obtains that condition 1(a) in Definition \ref{def:SFD1} is satisfied with $\varpi \geq \overline{\alpha}(\eta)$. For every $\hat x_0 \in \hat {\mathbb X}_0 \setminus \hat {\mathbb X}_S$, by choosing $x_0 = \hat x_0$ which is also inside ${\mathbb X}_0 \setminus {\mathbb X}_S$, we get $\mathcal{G}(x_0,\hat x_0) = 0 \leq \varpi$. Hence, condition 1(b) in Definition \ref{def:SFD1} holds as well, which concludes the proof.
\end{proof}

One can observe that in order to satisfy conditions \eqref{compoquaninit} and \eqref{secquantinit} (resp. \eqref{secquant}) simultaneously, the interconnected system must hold some property. Otherwise, those conditions may not hold at the same time. Before stating the next main result, we consider the following assumption which provides a small-gain type condition such that one can verify whether those competing conditions can be satisfied simultaneously.

\begin{assumption} \label{smallgainSCC}
	Consider an interconnected control system
$\Sigma=\mathcal{I}_{0_N}(\Sigma_1,\ldots,\Sigma_N)$ induced by $N\in\N_{\ge1}$ $\delta$-ISS control subsystems~$\Sigma_i$ which is associated with a directed graph  $G$. Assume that each $\Sigma_i$ and its abstraction $\hat{\Sigma}_i$ admit an initial-state (resp. current-state, infinite-step) opacity-preserving simulation function $\mathcal{G}_i$, together with functions $\kappa_i$, $\alpha_i$, $\bar \alpha_{i}$, and $\rho_{inti}$ as appeared in Definition \ref{def:SFD1} (resp. Definition \ref{def:SFDcur}, Definition \ref{def:SFDInf}) and Definition \ref{ass:1}. Let $\bar G_k = (I_k, E_k)$, $k \in [1;\bar N]$, be the SCCs in $G$, with each $\bar G_k$ consists of $\bar N_k \in \N_{\geq 1}$ vertices, $\sum_{k=1}^{\bar N}\bar N_k = N$, where each vertex represents a control subsystem. For any $\bar G_k$, we define $\forall i,j \in I_k$, 
	\begin{align}\label{gammadcur}
	~\gamma_{ij} = \left \{ \begin{array}{cc} 
	\kappa_{i}^{-1}\circ\rho_{inti}\circ\alpha_{j}^{-1}& \mbox{if } j \in \tup{Pre}_{I_k}(i),\\
	0 & \mbox{otherwise},
	\end{array}\right.
	\end{align}
where $\tup{Pre}_{I_k}(i) = \{j \in I_k| \exists (i,j) \in E\}$.
We assume that for every $\bar G_k$, $k \in [1;\bar N]$, the following small-gain type condition holds
\begin{align}\label{smallgaincur}
\gamma_{i_1i_2}\circ\gamma_{i_2i_3}\circ\cdots\circ\gamma_{i_{r-1}i_r}\circ\gamma_{i_ri_1}<\mathcal{I}_d,
\end{align}
$\forall(i_1,\ldots,i_r)\in\{k_1,\ldots,k_{\bar N_k}\}^r$, where $r\in \{1,\ldots,\bar N_k\}$.

\end{assumption}

\begin{algorithm}[ht]
	\DontPrintSemicolon
	
	\KwInput{The desired precision $\varpi \in \mathbb{R}_{>0}$; the directed graph $G$ composed of SCCs $\bar G_k$, $\forall k \in [1;\bar N]$; the simulation functions $\mathcal{G}_i$ equipped with functions $\kappa_i$, $\alpha_i$, and $\rho_{inti}$, $\forall i \in [1;N]$; functions $\sigma_{k_i}$ $\forall i \in I_k$ satisfying \eqref{gamcur} for $\bar G_k$, $\forall k \in [1;\bar N]$.}
	\KwOutput{$\varpi_i \in \mathbb{R}_{>0}$ and $\vartheta_i \in \mathbb{R}_{>0}$, $\forall i \in [1;N]$}

	Set $\varpi_i := \infty$, $\vartheta_i := \infty$, $\forall i \in [1;N]$, $\tup{sign}_k^* = 0$, $\forall k \in [1;\bar N]$, $G^* =  G$, 
	
	\While{$G^* \neq \varnothing$}
	{
		\ForEach{$\bar G_k \in \tup{BSCC}(G^*)$ }
		{
			\If{$\tup {sign}$$_k^* = 0$}
			{ $\tup{sign}_k^* = 1$
				
				\eIf{$G^* =  G$}
				{
					\tcc{Graph $G$ represents the entire network}
					\eIf{$\bar N_k > 1$} 
					{choose $r \in \mathbb{R}_{>0}$ s.t. $\max\limits_{i \in I_k}\{\sigma_i(r)\}= \varpi$;
						
					set $\varpi_i\!\!=\!\!\sigma_i(r)$, choose $\phi_{ij}$ s.t. $\!\!\max\limits_{j\in\tup{Pre}_{I_k}\!(i)\!}\!\{\phi_{ij}\}\! <\! \rho^{-1}_{inti}\!\!\circ\!\kappa_{i}(\varpi_i)\!-\!\!\!\!\!\max\limits_{j \in \tup{Pre}_{I_k}(i)}\{\!\alpha^{-1}_j(\varpi_j)\!\}, \forall i,j \!\in\! I_k$,					
					set $\vartheta_i\! \!=\!\! \max\limits_{j\in \tup{Pre}_{I_k}\!(i)\!}\!\{\!\alpha^{-1}_j\!(\varpi_j)  \!+\! \phi_{ij}\}$, $\forall i \!\in\! I_k$,				
					and choose $\phi_{ij} < \vartheta_i, \forall i \in I_k, \forall j\in \tup{Pre}_{I\setminus I_k}(i)$;		
					}
					{
						\tcc{The SCC contains only 1 subsystem}
						set $\varpi_i\! =\! \varpi$, choose $\vartheta_i \!\in\! \mathbb{R}_{>0}$ s.t. $\vartheta_i \!<\! \rho_{inti}^{-1}\circ\kappa_{i}(\varpi_i)$, $i \!\in\! I_k$; 
						
					choose $\phi_{ij} < \vartheta_i, \forall i \in I_k, \forall j\in \tup{Pre}_{I\setminus I_k}(i)$;
					}
				}
				{
					\eIf{$\bar N_k > 1$}
					{	
						choose $r \in \mathbb{R}_{>0}$, s.t. $\sigma_i(r) \leq \alpha_{i}(\min\limits_{j \in \tup{Post}_{I \setminus I_k}(i)}\{\vartheta_j\!-\!\phi_{ji}\})$, $\forall i \in I_k$ with $\tup{Post}_{I \setminus I_k}(i) \neq \varnothing$;	
				
						set $\varpi_i\!\!=\!\!\sigma_i(r)$, choose $\phi_{ij}$ s.t. $\!\!\max\limits_{j\in \tup{Pre}_{I_k}\!(i)\!}\!\{\phi_{ij}\}\! <\! \rho^{-1}_{inti}\!\!\circ\!\kappa_{i}(\varpi_i)\!-\!\!\!\!\max\limits_{j \in \tup{Pre}_{I_k}(i)}\{\!\alpha^{-1}_j(\varpi_j)\}, \forall i,j \!\in\! I_k,$ set $\vartheta_i \!\!=\!\! \max\limits_{j \in \tup{Pre}_{I_k}\!(i)\!}\{\!\alpha^{-1}_j(\varpi_j) \!+\! \phi_{ij}\}$, $\forall i \!\in\! I_k$,
						and choose $\phi_{ij} < \vartheta_i, \forall i \in I_k, \forall j\in \tup{Pre}_{I\setminus I_k}(i)$;
					}
					{
						\tcc{The SCC contains only 1 subsystem}
						set $\varpi_i \!\leq\! \alpha_{i}(\min\limits_{j \in \tup{Post}_{I \setminus I_k}(i)}\{\vartheta_j\!-\!\phi_{ji}\})$ and choose $\vartheta_i \!\in\! \mathbb{R}_{>0}$ s.t. $\vartheta_i \!<\! \rho_{inti}^{-1}\circ\kappa_{i}(\varpi_i)$, $i \!\in\! I_k$; 
						
						choose $\phi_{ij} < \vartheta_i, \forall i \in I_k, \forall j\in \tup{Pre}_{I\setminus I_k}(i)$;
					}
				}
			}
		}
		
		$G^* = G^* \setminus \tup{BSCC}(G^*)$;
	}
	\caption{Compositional design of local parameters $\varpi_i \in \mathbb{R}_{>0}$ and $\vartheta_i \in \mathbb{R}_{>0}$, $\forall i \in [1;N]$} \label{quantialgo}
\end{algorithm}

Now, we provide the next main theorem showing that under the above assumption, one can always compositionally design local quantization parameters such that conditions \eqref{compoquaninit} and \eqref{secquantinit} (resp. \eqref{secquant}) are fulfilled simultaneously.

\begin{theorem} \label{Main}
	Suppose that Assumption \ref{smallgainSCC} holds. Then, 
		for any desired precision $\varpi \in \mathbb{R}_{>0}$ as in Definition \ref{def:SFD2} (resp. Definition \ref{def:SRDcur2}, Definition \ref{def:SRDInf2}), 
		there always exist quantization parameters $\eta_i, \mu_i, \phi_i$, $\forall i \!\in\! [1;N]$,  such that \eqref{compoquaninit} and \eqref{secquantinit} (resp. \eqref{secquant}) are satisfied simultaneously, where the local parameters $\vartheta_i \!\in\! \mathbb{R}_{>0}$ and $\varpi_i \!\in\! \mathbb{R}_{>0}$,  $\forall i \!\in\! [1;N]$, are obtained from Algorithm \ref{quantialgo}. 
	\begin{proof}
	First, let us note that the small-gain type condition \eqref{smallgaincur} implies that for each $\bar G_k$, there exists $ \sigma_i \in \mathcal{K}_{\infty}$ satisfying, $\forall i \in I_k$,  
		\begin{align}\label{gamcur}
		&\max\limits_{j \in \tup{Pre}_{I_k}(i)}\{\gamma_{ij}\circ\sigma_j\}<\sigma_i;
 		\end{align} 
		see \cite[Theorem 5.2]{090746483}. Now, given a desired precision $\varpi$, we apply Algorithm \ref{quantialgo} to design the pair of parameters $(\varpi_i, \vartheta_i)$, $\forall i\in [1;N]$, for all of the subsystems.
	    In order to show that the algorithm guarantees the simultaneous satisfaction of conditions \eqref{compoquaninit} and \eqref{secquantinit} (resp. \eqref{secquant}), let us consider different scenarios of the SCCs. First, we consider the SCCs which are composed of only 1 subsystem, i.e $\bar N_k =1$. From lines 11 and 19, one observes that the selections of $\varpi_i$ and $\vartheta_i$ for each subsystem immediately ensure that
		\begin{align}
		\kappa_{i}(\varpi_i)-\rho_{inti}(\vartheta_i) >0,
     	\end{align}
		which implies that there always exist quantization parameters $\eta_i, \mu_i$ to satisfy \eqref{secquantinit} (resp. \eqref{secquant}).
		Next, let us consider the SCCs with more than 1 subsystems, i.e $\bar N_k >1$.		Now, suppose that for each $\bar G_k$, we are given a sequence of functions $\sigma_i \in \mathcal{K}_{\infty}$, $\forall i\in I_k$, satisfying \eqref{gamcur}. 
		From \eqref{gammadcur} and \eqref{gamcur}, we have, $\forall i\in I_k$,
		\begin{align} \label{inequalityinscc}
		\!\!\!\max\limits_{j \in \tup{Pre}_{I_k}\!(i)\!}\{\gamma_{ij}\!\circ\!\sigma_j\}\!<\!\sigma_i
		\!\!\Longrightarrow\! \!\!\!\max\limits_{j \in \tup{Pre}_{I_k}\!(i)\!}\{\kappa_{i}^{-1}\!\circ\!\rho_{inti}\!\circ\!\alpha_{j}^{-1}\!\! \circ\!\sigma_j\}\!<\!\sigma_i \!\Longrightarrow \!\rho_{inti} \!\circ\!\!\!\!\max\limits_{j \in \tup{Pre}_{I_k}\!(i)\!}\{\alpha_{j}^{-1}\!\! \circ\!\sigma_j\}\!< \!\kappa_{i}\!\circ\!\sigma_i.
		\end{align}
		
		Now, let us set $\varpi_i=\sigma_i(r)$, $\forall i \in I_k$, where $r$ is chosen under the criteria in lines 8 and 16,  and choose the internal input quantization parameters $\phi_{ij}$ such that
		\begin{align} \label{wquanti}
		 \max\limits_{j \in \tup{Pre}_{I_k}(i)}\{\phi_{ij}\} < \rho^{-1}_{inti}\circ\kappa_{i}(\varpi_i)-\max\limits_{j \in \tup{Pre}_{I_k}(i)}\{\alpha^{-1}_j(\varpi_j)\}, \forall i,j \in I_k.
		\end{align}
	Now, by setting $\vartheta_i\! =\! \max\limits_{j \in \tup{Pre}_{I_k}\!(i)\!}\{\alpha^{-1}_j(\varpi_j) \!+\! \phi_{ij}\}$ and combining  \eqref{wquanti} with \eqref{inequalityinscc}, one has, $\forall i \in I_k$,
			\begin{align}\label{posiinequal}
		\rho_{inti}(\vartheta_i) = \rho_{inti}(\max\limits_{j \in \tup{Pre}_{I_k}(i)} \{\alpha^{-1}_j(\varpi_j)  + \phi_{ij}\})\leq \rho_{inti} (\max\limits_{j \in \tup{Pre}_{I_k}(i)}\{\alpha^{-1}_j(\varpi_j)\}+\max\limits_{j \in \tup{Pre}_{I_k}(i)}\{\phi_{ij}\})< \kappa_{i}(\varpi_i),
		\end{align}
	which again implies that one can always find suitable local parameters $\eta_i, \mu_i$ to satisfy \eqref{secquantinit} (resp. \eqref{secquant}).
Additionally, it can be observed that, the design procedure in Algorithm \ref{quantialgo} follows the hierarchy of the acyclic directed graph which is composed of SCCs as vertices. 
	The selection of $\vartheta_i =  \max\limits_{j \in \tup{Pre}_{I_k}(i)} \{\alpha^{-1}_j(\varpi_j)  + \phi_{ij}\}$ as in lines 9 and 17, together with the design procedure for $\varpi_i$ and $\phi_{ij}$ ensure that  \eqref{compoquaninit}  is satisfied as well, which concludes the proof.
	\end{proof}
\end{theorem}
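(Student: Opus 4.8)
\emph{Proof strategy.} The plan is to show that the parameters $(\varpi_i,\vartheta_i)$ returned by Algorithm~\ref{quantialgo} can always be completed to a full collection of quantization parameters satisfying \eqref{compoquaninit} together with \eqref{secquantinit} (resp. \eqref{secquant}), proceeding in the order in which the algorithm processes the strongly connected components of $G$. First I would record the algebraic consequence of the small-gain hypothesis: by \eqref{smallgaincur} and \cite[Theorem~5.2]{090746483}, for each SCC $\bar G_k$ there exist functions $\sigma_i\in\mathcal{K}_{\infty}$, $i\in I_k$, with $\max_{j\in\tup{Pre}_{I_k}(i)}\{\gamma_{ij}\circ\sigma_j\}<\sigma_i$; unfolding the definition \eqref{gammadcur} of $\gamma_{ij}$, this is exactly $\rho_{inti}(\max_{j\in\tup{Pre}_{I_k}(i)}\{\alpha_j^{-1}\circ\sigma_j\})<\kappa_i\circ\sigma_i$ for all $i\in I_k$, as in \eqref{inequalityinscc}. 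This is the fact that makes the two conflicting demands on $\vartheta_i$ (large enough for \eqref{compoquaninit}, small enough relative to $\varpi_i$ for \eqref{secquantinit}) compatible.

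Next I would observe that the outer loop of the algorithm removes the bottom SCCs of the current graph at each pass, hence it terminates and assigns a pair $(\varpi_i,\vartheta_i)$ to each subsystem exactly once. Processing the SCCs in this order, I would then verify, for each $\bar G_k$, two claims. First, feasibility of the local quantization: for a singleton SCC the choices $\varpi_i=\varpi$ (resp. $\varpi_i\le\alpha_i(\min_{j\in\tup{Post}_{I\setminus I_k}(i)}\{\vartheta_j-\phi_{ji}\})$) and $\vartheta_i<\rho_{inti}^{-1}\circ\kappa_i(\varpi_i)$ from lines~11 and~19 give $\kappa_i(\varpi_i)-\rho_{inti}(\vartheta_i)>0$ at once; for $\bar N_k>1$, setting $\varpi_i=\sigma_i(r)$ with $r$ as in lines~8 and~16, choosing the within-component parameters $\phi_{ij}$ so that \eqref{wquanti} holds, and putting $\vartheta_i=\max_{j\in\tup{Pre}_{I_k}(i)}\{\alpha_j^{-1}(\varpi_j)+\phi_{ij}\}$ as in lines~9 and~17 yields, via the chain \eqref{posiinequal}, $\rho_{inti}(\vartheta_i)<\kappa_i(\varpi_i)$. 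In both cases $\kappa_i(\varpi_i)-\rho_{inti}(\vartheta_i)$ is a fixed positive number, so one may select $\mu_i$ with $\rho_{exti}(\mu_i)$ below that number and then $\eta_i$ small enough to satisfy \eqref{secquantinit} (resp. \eqref{secquant}); in the current-state and infinite-step cases the additional constraint \eqref{quantconst} is met simply by taking $\theta_i\ge\underline{\alpha}_i^{-1}(\varpi_i)$. Second, validity of \eqref{compoquaninit}: for a predecessor $j\in\tup{Pre}_I(i)$ inside $\bar G_k$ the inequality $\alpha_j^{-1}(\varpi_j)+\phi_{ij}\le\vartheta_i$ is immediate from the definition of $\vartheta_i$ (and vacuous when $\bar N_k=1$); for a predecessor $j$ in a previously processed component, the clamp $\varpi_i\le\alpha_i(\min_{j\in\tup{Post}_{I\setminus I_k}(i)}\{\vartheta_j-\phi_{ji}\})$ of lines~16 and~18 rearranges to $\alpha_i^{-1}(\varpi_i)+\phi_{ji}\le\vartheta_j$, which is precisely \eqref{compoquaninit} for the pair with $i\in\tup{Pre}_I(j)$.

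The step I expect to require the most care is the consistency of the hierarchical construction: when a non-source SCC $\bar G_k$ with $\bar N_k>1$ is reached, a single radius $r$ must be chosen so that $\sigma_i(r)\le\alpha_i(\min_{j\in\tup{Post}_{I\setminus I_k}(i)}\{\vartheta_j-\phi_{ji}\})$ holds simultaneously for every $i\in I_k$ feeding outside $\bar G_k$, while $\varpi_i=\sigma_i(r)$ stays usable. This is possible because each $\sigma_i\in\mathcal{K}_{\infty}$ vanishes at $0$ and because every right-hand side $\alpha_i(\min_{j}\{\vartheta_j-\phi_{ji}\})$ is a strictly positive constant — the earlier passes always pick the boundary parameters with $\phi_{ji}<\vartheta_j$ — so any sufficiently small $r$ is admissible. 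Assembling the per-SCC verifications over all components then shows that the output of Algorithm~\ref{quantialgo} always admits quantization parameters $\eta_i,\mu_i,\phi_i$ (and $\theta_i$, where relevant) making \eqref{compoquaninit} and \eqref{secquantinit} (resp. \eqref{secquant}) hold simultaneously, which is the claim.
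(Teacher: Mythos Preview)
Your proposal is correct and follows essentially the same route as the paper's proof: invoke the small-gain consequence \eqref{gamcur} to get \eqref{inequalityinscc}, split into singleton versus multi-vertex SCCs to obtain $\kappa_i(\varpi_i)-\rho_{inti}(\vartheta_i)>0$ (hence feasibility of \eqref{secquantinit}/\eqref{secquant}), and then use the definition of $\vartheta_i$ together with the hierarchical clamp on $\varpi_i$ to secure \eqref{compoquaninit}; your treatment of the cross-SCC edges and of the existence of a suitable $r$ in line~16 is in fact more explicit than the paper's. One wording slip to fix: in your second claim you write ``for a predecessor $j$ in a previously processed component'' but then take $j\in\tup{Post}_{I\setminus I_k}(i)$ and verify \eqref{compoquaninit} with $i\in\tup{Pre}_I(j)$ --- there $j$ is a \emph{successor} of $i$, not a predecessor (predecessors of $i$ outside $I_k$ lie in SCCs processed \emph{later}), though the algebra you display is correct once the roles are read consistently.
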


\begin{remark}
Note that by involving the notion of SCCs in the design procedure for selecting parameters, we are allowed to check the small-gain condition and design local parameters inside each SCC, independently of the entire network. Let us also remark the soundness of Algorithm \ref{quantialgo}. It can be seen that as long as Assumption \ref{smallgainSCC} holds, for any desired precision $\varpi \in \mathbb{R}_{>0}$, the algorithm always provides us suitable pairs of parameters $(\varpi_i, \vartheta_i)$, $\forall i\in [1;N]$. In addition, as can be observed in the procedure done in lines 9-12 and 16-20 of Algorithm \ref{quantialgo}, provided that the inequalities hold, we have freedom to choose the local parameters according to local quantization criteria. Note that since the interconnected system we consider in this paper is composed of finite number of subsystems, the number of SCCs is finite. Therefore, the algorithm terminates in finite iterations. Note that the SCCs of a graph $G$ can be computed in O(m) time \cite{Tarjan72depthfirst}, where $m$ is the number of edges in $G$.  
\end{remark}

\section{Example}
\subsection{Compositional construction of opacity-preserving finite abstractions}
Here, we provide an illustrative example to explain the design procedure of local quantization parameters using Algorithm \ref{quantialgo}. The system model is adapted from \cite{pola2016symbolic}.

Consider the interconnected discrete-time system $\Sigma$ consisting of $n= 6$ subsystems:
\begin{align}\label{case1sys}
\Sigma:\left\{
\begin{array}[\relax]{rl}
\mathbf{x}_1(k+1)\!\!\!\!\!&=k_{11}\frac{\mathbf{x}_1(k)}{1+\mathbf{x}^2_1(k)}+\nu_1(k),\\
\mathbf{x}_2(k+1)\!\!\!\!\!&=k_{21}\tanh (\mathbf{x}_2(k))+k_{22}(\textup{sech}(\mathbf{x}_3(k))-1+\mathbf{x}_1(k)),\\
\mathbf{x}_3(k+1)\!\!\!\!\!&=k_{31}\mathbf{x}_3(k)+k_{32}(\sin \mathbf{x}_2(k) + \mathbf{x}_5(k)) +\nu_3(k),\\
\mathbf{x}_4(k+1)\!\!\!\!\!&=k_{41}(\cos(\mathbf{x}_4(k))-1)+k_{42}(\tanh(\mathbf{x}_5(k))),\\
\mathbf{x}_5(k+1)\!\!\!\!\!&=k_{51}\sin(\mathbf{x}_5(k))+k_{52}(\textup{sech}(\mathbf{x}_4(k))-1)+\nu_5(k),\\
\mathbf{x}_6(k+1)\!\!\!\!\!&=k_{61}\frac{\mathbf{x}_6(k)}{1+|\mathbf{x}_6(k)|}+k_{62}\mathbf{x}_5(k),\\
\mathbf{y}(k)\!\!\!\!\!&=\mathbf{x}(k),
\end{array}\right.
\end{align}

where $k \in \N$, $\mathbf{x}(k) = [\mathbf{x}_1(k);\dots;\mathbf{x}_n(k)]$, $\mathbf{y}(k)=[\mathbf{x}_1(k);\dots;\mathbf{x}_n(k)]$. The outputs of the subsystems are:  $\mathbf{y}_i(k)\!=\!c_i\mathbf{x}_i(k)$, where $c_i=[c_{i1};\dots;c_{in}]$ with $c_1 = [1;1;0;0;0;0]$, $c_2 = [0;1;1;0;0;0]$, $c_3 = [0;1;1;0;0;0]$, $c_4 = [0;0;0;1;1;0]$, $c_5 = [0;0;1;1;1;1]$, $c_6 = [0;0;0;0;0;1]$, internal inputs subject to the constraints $w_i =[y_{1i};\ldots;y_{(i-1)i};y_{(i+1)i};\ldots;y_{ni}]$, $\forall i \in [1;6]$,
$\kappa_{i,1}=0.4$, $\forall i \in [1;6]$, $\kappa_{i,2} =0.2$, $\forall i \in [2;5]$,  $\mathbb{X}_i = [-1,1]$ and $\mathbb{U}_i = [-1,1]$, $\forall i \in [1;6]$. One can readily verify that the system $\Sigma$ in \eqref{case1sys} can be seen as an interconnection of 6 scalar subsystems $\Sigma_i$, $i \in [1;6]$, as in Definition \ref{interconnectedsystem}.
The directed graph $G = (I,E)$ is specified by $I = [1;6]$, $E = \{(2,1),(3,2),(2,3),(5,4),(3,5),(4,5),(6,5)\}$. Strongly connected components of $G$ are $\bar G_1$ with $I_1 = \{1\}$, $\bar G_2$ with $I_2 = \{4,5\}$, $\bar G_3$ with $I_3 = \{2,3\}$ and $\bar G_4$ with $I_4 = \{6\}$.
Now we apply our main results in the previous sections to compositionally construct a finite abstraction of $\Sigma$ with accuracy $\varepsilon=0.01$ as defined in \eqref{er}, which preserves approximate initial-state opacity.

First, let us choose functions $V_i = |x_i- x'_i|$, $\forall i \in [1;6]$. It can be readily seen that $V_i$ are $\delta$-ISS Lyapunov functions for subsystems $\Sigma_i$ satisfying \eqref{eq:ISTFC1} and \eqref{eq:ISTFC2} in Definition \ref{ass:1}, with $\kappa_i (s) = (1-|\kappa_{i,1}|)s $, $\underline{\alpha}_i (s)= \overline{\alpha}_i (s)={\hat \gamma}_i (s) = s$, ${\rho_{int}}_1(s) = 0$, ${\rho_{int}}_2(s) = 2|\kappa_{2,2}|s$, ${\rho_{int}}_3(s) =2|\kappa_{3,2}|s$, ${\rho_{int}}_4(s) = |\kappa_{4,2}|s$, ${\rho_{int}}_5(s) = |\kappa_{5,2}|s$, ${\rho_{int}}_6(s) = |\kappa_{6,2}|s$,
${\rho_{ext}}_2(s) = {\rho_{ext}}_4(s) = {\rho_{ext}}_6(s) = 0$, ${\rho_{ext}}_1(s) = {\rho_{ext}}_3(s) ={\rho_{ext}}_5(s) =s$.
The Lipschitz assumption holds with $\ell_i (s) = s$. Since we have $\gamma_{ij}(s) < \mathcal{I}_d$ as defined in \eqref{gammadcur}, $\forall i,j \in I$, the small-gain condition \eqref{smallgaincur} is readily satisfied for every SCC. Functions $\sigma_i = \mathcal{I}_d$, $\forall i \in I$, readily satisfy \eqref{gamcur}. 

Now we apply Algorithm \ref{quantialgo} to design the local parameters. The desired precision is $\varpi=0.01$ by \eqref{er}. We design for all of the subsystems, $\phi_{i}=0$, $\forall i \in [1;6]$. We start with $G^* = G$ and get the associated $\tup{BSCC}(G^*) = \{\bar G_3, \bar G_4\}$ for line 3. First, let us consider the SCC $\bar G_3$. We choose $r = 0.01$ to satisfy the conditions in lines $8-9$ with $\varpi_2 = \vartheta_2 = \varpi_3 = \vartheta_3 = \varpi=0.01$. For $\bar G_4$, since it contains only 1 subsystem $\Sigma_6$, we get in line $11$, $\varpi_6 = \varpi =0.01$ and choose $\vartheta_6 = 0.01$. Now $G^*$ is updated in line $25$ to $\{\bar G_1, \bar G_2\}$. The bottom SCCs of the updated $G^*$ is $\{\bar G_1, \bar G_2\}$. Since the current graph $G^*$ is not the entire network anymore, we go to lines $16-20$. We proceed with $\bar G_1$ firstly. Since $\bar G_1$ consists of only 1 subsystem, we go to line $19$ and set $\varpi_1 = \vartheta_2 = 0.01$ and $\vartheta_1 = 0.01$ such that the inequalities hold. Now consider $\bar G_2$. In line $16$, we choose $r = \min\{\vartheta_3, \vartheta_6\} = 0.01$, and then set $\varpi_4 = \varpi_5 = r$, $\vartheta_4 = \varpi_5$ and $\vartheta_5 = \varpi_4$ in line $17$. Next, the set $G^*$ becomes empty and the algorithm ends. Till now, we obtain local parameters $(\varpi_i, \vartheta_i)$ for each subsystem. Now we have the freedom to design the local quantization parameters $\eta_i, \mu_i$ using $(\varpi_i, \vartheta_i)$ while satisfying inequality \eqref{secquantinit}. We show here a choice of suitable tuples of local parameters $q_i = (\eta_i, \theta_i, \mu_i, \phi_i)$ as: $q_1 = (0.006,0,0,0)$, $q_2 = (0.002,0,0,0)$, $q_3 = (0.002,0,0,0)$, $q_4 = (0.004,0,0,0)$, $q_5 = (0.004,0,0,0)$, $q_6 = (0.004,0,0,0)$. Now, one can construct local abstractions for subsystems as in Subsection \ref{absCur}. Using the result in Theorem \ref{thm:2}, one can verify that $V_i = |x_i- x'_i|$ is a $\varpi_i$-InitSOPSF from each $\Sigma_i$ to its abstraction $\hat \Sigma_i$.
By the results in Theorem \ref{thm:3}, one can verify that 	$\tilde{V}(x,\hat x)=  \max\limits_{i}\{|x_i- \hat{x}_i| \}$ is a $\varpi$-InitSOPSF from $\Sigma$ to $\hat \Sigma = \mathcal{I}_{0_n}(\hat \Sigma_1,\dots,\hat \Sigma_n)$.

\subsection{Compositional verification of initial-state opacity for an interconnected system}

Consider the interconnected discrete-time linear system $\Sigma$ described by:
\begin{align}\label{examplelin}
\Sigma:\left\{
\begin{array}[\relax]{rl}
\mathbf{x}(k+1)\!\!\!\!\!&=A\mathbf{x}(k)+B\nu(k),\\
\mathbf{y}(k)\!\!\!\!\!&=C\mathbf{x}(k),
\end{array}\right.
\end{align}
where  $k \in \N$, $A\in\R^{n\times n}$ is a matrix with $\{A\}_{ii} = a_i = 0.1$, $\{A\}_{i(i-1)} = 0.05$, $\forall i \in [2;n]$, and all other elements are zero, $B\in\R^{n\times n}$ is a diagonal matrix with $\{B\}_{ii} = b_i = 1$, $\{B\}_{ij} = 0$, $\forall i, j \in [1;n], i \neq j$, $C=[0~0~\ldots~0~1] \in\R^{1 \times n}$,
$\mathbf{x}(k) = [\mathbf{x}_1(k);\dots;\mathbf{x}_n(k)]$, $\nu(k) = [\nu_1(k);\dots;\nu_n(k)]$, and $\mathbf{y}(k) = \mathbf{y}_{nn}(k)$.
Intuitively, the output of the overall system is the external output of the last subsystem $\Sigma_n$. The state space is $\mathbb X =\mathbb X_0 = ]0~ 0.6[^n$, the input set is a singleton $\mathbb U = \{0.145\}^n$ and the secret set is $\mathbb{X}_S = ]0~0.2] \times [0.4~0.6[ \times ]0~0.6[^{n-2}$, and the output set is $\mathbb Y = ]0~0.6[$.

Now, let us consider $n \in \N_{\geq 1}$ subsystems $\Sigma_i$, each described by:
\begin{align}\label{exsm}
\Sigma_i:\left\{
\begin{array}[\relax]{rl}
\mathbf{x}_i(k+1)\!\!\!\!\!&=0.1\mathbf{x}_i(k)+\nu_i(k)+ 0.05\omega_i(k),\\
\mathbf{y}_i(k)\!\!\!\!\!&= {c}_i \mathbf{x}_i(k),
\end{array}\right.
\end{align}
where ${c}_i = [c_{i1};\dots;c_{in}]$ with $c_{i(i+1)} = 1$, $c_{ij} = 0$, $\forall i\in [1;n-1], \forall j \neq i+1$, $c_{nn} = 1$, $c_{nj} = 0$, $\forall j \in [1;n-1]$, $\nu_i(k) = 0.145$, $\omega_1(k) = 0$, and $\omega_i(k) = \mathbf{y}_{(i-1)i}(k)$, $\forall i\in [2;n]$. The state set is $\mathbb{X}_i=\mathbb{X}_{i0}= ]0~ 0.6[$, the input set is $\mathbb U_i = \{0.145\}$, the secret set is $\mathbb{X}_{1S} = ]0~0.2]$, $\mathbb{X}_{2S}= [0.4~0.6[$, $\mathbb{X}_{iS}=]0~ 0.6[$, $\forall i\in [3;n]$, 
the output set is $\mathbb Y_{i(i+1)}=]0~ 0.6[$, $\mathbb Y_{ij}=0$, $\forall i\in [1;n-1]$, $\forall j \neq i+1$, $\mathbb Y_{nn} = ]0~0.6[$, $\mathbb Y_{nj} = 0$, $\forall j \in [1;n-1]$. 
and the internal input set is $\mathbb W_i=\prod_{j=1, j\neq i}^{n} \mathbb Y_{ji}$.
One can verify that $\Sigma = \mathcal{I}_{0_n}(\Sigma_1,\dots,\Sigma_n)$. The main goal of this example is to verify approximate initial-state opacity of the concrete network using its finite abstraction. Now, let us construct a finite abstraction of $\Sigma$ compositionally with accuracy $\varepsilon = 0.25$ as defined in \eqref{er}, which preserves initial-state opacity. We apply our main results of previous sections to achieve this goal.

Consider functions $V_i = |x_i- x'_i|$, $\forall i \in [1;n]$. It can be is readily verified that $V_i$ are $\delta$-ISS Lyapunov functions for subsystems $\Sigma_i$ satisfying \eqref{eq:ISTFC1} and \eqref{eq:ISTFC2} in Definition \ref{ass:1}, with $\kappa_i (s) = (1-a_i)s = 0.9s$, $\rho_{exti}(s) = \hat \gamma_i (s) =\underline{\alpha}_i (s)= \overline{\alpha}_i (s)= s$, and $\rho_{inti} (s)= 0.05s$. In addition, the Lipschitz assumption defined in \eqref{lipschitz} holds with $\ell_i (s) = s$. Accordingly, the desired precision for Algorithm \ref{quantialgo} is $\varpi=0.25$. It is seen that the system is made up of $n$ identical subsystems in a cascade interconnection, thus, the resulting directed graph $G = (I,E)$ is specified by $I = [1;n]$, $E = \{(1,2),(2,3),(3,4),\dots,(n-1,n)\}$. Each of the subsystem is a strongly connected component of $G$. The small-gain condition \eqref{smallgaincur} is satisfied readily.
Then, by applying Algorithm \ref{quantialgo}, we obtain proper pairs of local parameters $(\varpi_i,\vartheta_i)= (0.25, 0.25)$ for all of the subsystems. Then, a suitable tuple $q_i = (\eta_i, \mu_i, \theta_i, \phi_i) = (0.2,0,0,0)$ of quantization parameters is chosen such that inequality \eqref{secquantinit} for the abstraction $\hat \Sigma_i$ of each subsystem $\Sigma_i$ is satisfied. Note that the choice of local quantization parameters is suitable in terms of preserving opacity, regardless of the number of subsystems (i.e. $n$). Next, we construct local abstractions for subsystems as in Subsection \ref{absCur}. Using the result in Theorem \ref{thm:2}, one can verify that $V_i = |x_i- x'_i|$ is a $\varpi_i$-InitSOPSF from each $\Sigma_i$ to its abstraction $\hat \Sigma_i$. Furthermore, by the compositionality result in Theorem \ref{thm:3}, we obtain that $\tilde{V} = \max\limits_{i}\{V_i(x_{i},\hat{x}_{i}) \} = \max\limits_{i}\{|x_i- x'_i|\}$ is a  $\varpi$-InitSOPSF from $\Sigma$ to $\hat \Sigma = \mathcal{I}_{0_n}(\hat \Sigma_1,\dots,\hat \Sigma_n)$ satisfying the conditions in Definition \ref{def:SFD2} with  $\varpi = \max\limits_{i} \varpi_i = 0.25$ and $\alpha(s) = \{\max\limits_{i}\{\alpha^{-1}_i(s)\}\}^{-1} = s$.

Now, let us verify opacity of $\Sigma$ using the interconnected abstraction $\hat \Sigma$. Note that given the local quantization parameters $q_i = (\eta_i, \mu_i, \theta_i, \phi_i) = (0.2,0,0,0)$, each local state set $\mathbb{X}_i$ is discretized into $2$ discrete states as $\hat{\mathbb{X}}_i = \{0.2,0.4\}$, which implies that the state space of $\hat \Sigma$ is $\hat{\mathbb{X}}_0 = \hat{\mathbb{X}} = \{0.2,0.4\}^{n}$, the discrete secret state set is $\hat{\mathbb X}_S =  \{0.2\} \times \{0.4\} \times \{0.2,0.4\}^{n-2}$, the discrete input set is $\hat{\mathbb{U}}= \{0.145\}^{n}$, and the output set is $\hat{\mathbb Y} = \{0.2,0.4\}$. In order to check opacity of the interconnected abstraction, we first show an example of a network consisting of $2$ subsystems, as shown in Figure~\ref{exautomata1}. The two smaller automata in the left represent the symbolic subsystems and the one in the right represents the interconnected abstraction for the whole network. Each circle is labeled by the state (top half) and the corresponding output (bottom half). Initial states are distinguished by being the target
of a sourceless arrow. One can easily see that $\mathcal{I}(\hat \Sigma_1,\hat \Sigma_2)$ is $0$-approximate initial-state opaque, since for any run starting from secret state $aA$, there exists a run from non-secret state $AA$ such that the output trajectories are exactly the same. Next, let us see the case when the number of subsystems is $n = 3$. As seen in Figure \ref{exautomata2}, for any run starting from any secret state, i.e. $aAa$ and $aAA$, there exists a run from a non-secret state, i.e. $Aaa$ and $AAA$, such that the output trajectories are exactly the same. Due to lack of space, we do not plot the automata for the case of $n=4$, but we verified that the network is still $0$-approximate initial-state opaque. 
We expect that the interconnected network holds this property regardless of the number of subsystems due to the homogeneity of subsystems and the structure of the network topology. Thus, one can conclude that $\hat \Sigma = \mathcal{I}_{0_n}(\hat \Sigma_1,\dots,\hat \Sigma_n)$ is $0$-approximate initial-state opaque. Therefore, by Proposition \ref{thm:InitSOP}, we obtain that the original network $\Sigma = \mathcal{I}_{0_n}(\Sigma_1,\dots,\Sigma_n)$ is $0.5$-approximate initial-state opaque.

\begin{figure}
\begin{tikzpicture}[->,>=stealth',shorten >=1pt,auto,node distance=2.2cm,inner sep=2pt, initial text =,
every state/.style={draw=black,fill=white,state/.style=state with output},
accepting/.style={draw=black,thick,fill=red!80!green,text=white},bend angle=25]

\node[] at (-2,0) {$\hat \Sigma_1$:};
\node[] at (-2,-2.2) {$\hat\Sigma_2$:};
\node[] at (6,-1.2) {$\mathcal{I}(\hat \Sigma_1,\hat \Sigma_2)$:};
\node[state with output,accepting, initial] (A)                    {$a$ \nodepart{lower} $0y$};
\node[state with output, initial right]         (B) [right of=A] {$A$ \nodepart{lower} $0Y$};
\node[state with output, initial] (C)   [below of=A]                 {$a$ \nodepart{lower} $0y$};
\node[state with output,accepting, initial right]         (D) [right of=C] {$A$ \nodepart{lower} $0Y$};

\path (A) edge [loop above]    node {} (A)

(B) edge              node {} (A)

(C) edge [loop above]    node {a/A} (C)

(D) edge     [loop above]  node {A} (D)
      edge           node {a/A} (C)
;

\node[state with output, initial] (A1)       at (8,0)             {$aa$ \nodepart{lower} $y$};
\node[state with output, initial right]         (B1) [right of=A1] {$Aa$ \nodepart{lower} $y$};
\node[state with output, accepting, initial]         (D1) [below of=A1] {$aA$ \nodepart{lower} $Y$};
\node[state with output, initial right]         (H1) [below of=B1]       {$AA$ \nodepart{lower} $Y$};

\path (A1) edge [loop above]    node {} (A1)

(B1) edge              node {} (A1)
(D1) edge             node {} (A1)

(H1) edge   node {} (A1)
edge          node {} (D1);

\end{tikzpicture}
\caption{Compositional abstraction of an interconnected discrete-time linear system consisting of 2 subsystems.}
\label{exautomata1}
\vspace{-0.25cm}
\end{figure}
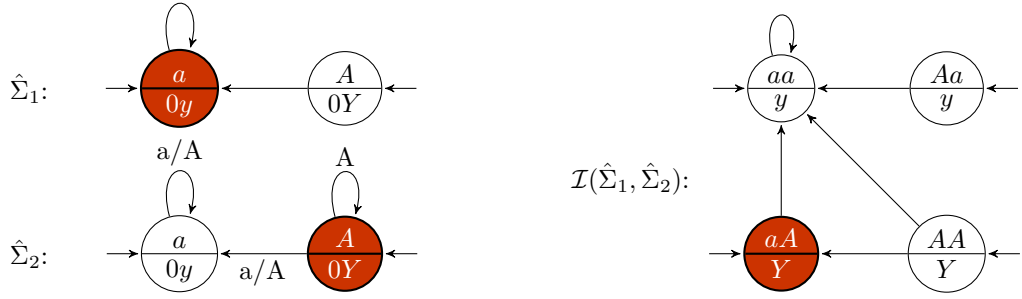

\begin{figure}
	\begin{tikzpicture}[->,>=stealth',shorten >=1pt,auto,node distance=2.3cm, inner sep=2pt, initial text =,
	every state/.style={draw=black,fill=white,state/.style=state with output},
	accepting/.style={draw=black,thick,fill=red!80!green,text=white},bend angle=25]
	
	\node[] at (-2,0) {$\hat \Sigma_1$:};
	\node[] at (-2,-2.3) {$\hat\Sigma_2$:};
	\node[] at (-2,-4.6) {$\hat\Sigma_3$:};
	\node[] at (6,0) {$\mathcal{I}(\hat \Sigma_1,\hat \Sigma_2,\hat\Sigma_3)$:};
	
	\node[state with output,accepting,initial] (A0)                    {$a$ \nodepart{lower} $0y0$};
	\node[state with output,initial right]         (B0) [right of=A0] {$A$ \nodepart{lower} $0Y0$};
	
	\node[state with output,initial] (C0)   [below of=A0]                 {$a$ \nodepart{lower} $00y$};
	\node[state with output,accepting,initial right]         (D0) [right of=C0] {$A$ \nodepart{lower} $00Y$};

	\node[state with output,initial] (E0)   [below of=C0]                 {$a$ \nodepart{lower} $00y$};
	\node[state with output,initial right]         (F0) [right of=E0] {$A$ \nodepart{lower} $00Y$};

	\path (A0) edge [loop above]    node {} (A0)
	
	(B0) edge              node {} (A0)
	
	(C0) edge [loop above]    node {a/A} (C0)
	
	(D0) edge     [loop above]  node {A} (D0)
	edge           node {a/A} (C0)
	
	(E0) edge [loop above]    node {a/A} (E0)
	
	(F0) edge     [loop above]  node {A} (F0)
	edge           node {a/A} (E0)

	;

	\node[state with output,initial above] (A)      at (8,-1.5)              {$aaa$ \nodepart{lower} $y$};
	\node[state with output, accepting,initial below]         (B) [right of=A] {$aAa$ \nodepart{lower} $y$};
	\node[state with output,initial]         (D) [left of=A] {$Aaa$ \nodepart{lower} $y$};
	\node[state with output,initial right]         (E) [right of=B] {$AAa$ \nodepart{lower} $y$};
	\node[state with output,initial below]         (H) [below of=A]       {$aaA$ \nodepart{lower} $Y$};
	\node[state with output,initial]         (F) [left of=H] {$AAA$ \nodepart{lower} $Y$};
	\node[state with output, accepting,initial below]         (G) [right of=H] {$aAA$ \nodepart{lower} $Y$};
	\node[state with output,initial right]         (I) [right of=G]       {$AaA$ \nodepart{lower} $Y$};

	\path (A) edge  [loop left]    node {} (A)
	(B) edge              node {} (A)
	(D) edge      [bend right]        node {} (A)
	(E) edge          node {} (B)
	edge    [bend right]      node {} (A)
	(F) edge          node {} (H)
	edge           node {} (A)
	
	(G) edge          node {} (H)
	edge          node {} (A)
	(H) edge    node {} (A)
	(I) edge    node {} (A);
	\end{tikzpicture}
	\caption{Compositional abstraction of an interconnected discrete-time linear system consisting of 3 subsystems.}
	\label{exautomata2}
\end{figure}
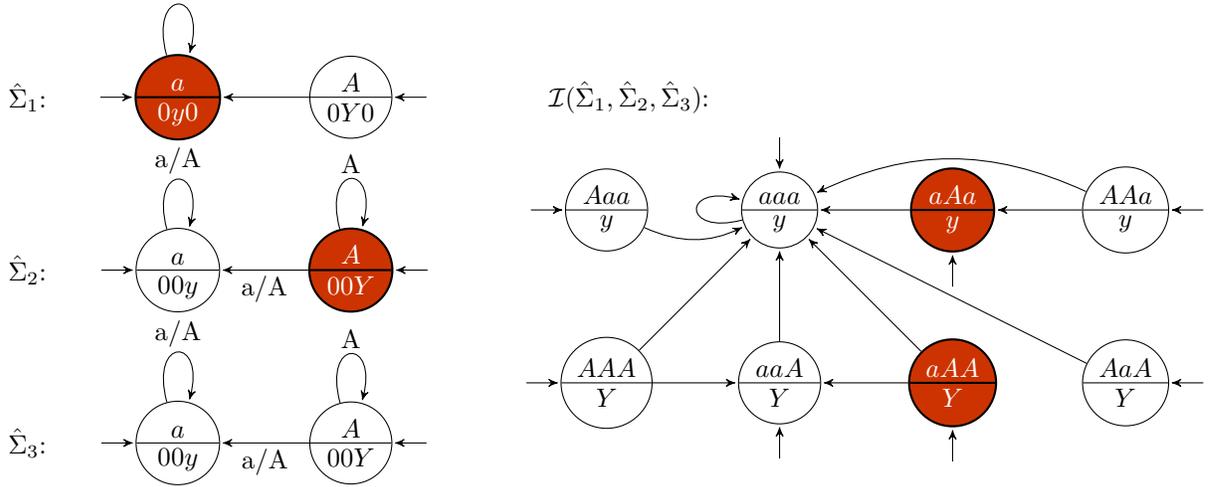

\begin{remark}
	In Figure~\ref{exautomata1}, for simplicity of demonstration, we use symbol ``$a$" (resp. ``$A$") to show state $x_i = 0.2$ (resp. $x_i = 0.4$). The symbols ``$aa$", ``$Aa$", ``$aA$" and ``$AA$" represent the state vectors $x = [0.2;0.2]$, $x = [0.4;0.2]$, $x = [0.2;0.4]$, and $x = [0.4;0.4]$, respectively.
	The lower parts of the states indicate the outputs of the states, where the symbols ``$y$" and ``$Y$" represent, respectively, the output $y_{ij}=0.2$ and $y_{ij}=0.4$. Similarly, symbols ``$ 0y$" and ``$ 0Y$" represent the output vectors ${y} = [0;0.2]$ and ${y} = [0;0.4]$. The states marked in red represent the secret states.
	The symbols on the edges show the internal input coming from other subsystems. The symbols used in Figure~\ref{exautomata2} represent similar meanings.
\end{remark}

\section{Conclusion}
In this paper, we proposed a methodology to compositionally construct opacity-preserving symbolic models of interconnected discrete-time control systems. New notions of so-called opacity-preserving simulation functions are introduced to characterize the relations between two systems in terms of preservation of opacity. By leveraging these simulation functions, we constructed  abstractions of the subsystems, while preserving the opacity properties. Then, a symbolic model of the network can be obtained by interconnecting the local finite abstractions while retaining the opacity property. An illustrative example was presented to describe the design procedure of the local quantization parameters. Finally, we applied our main results to a linear interconnected system. 

\bibliographystyle{IEEEtran}      
\bibliography{biblio} 	
\end{document}